\pdfoutput=1

\documentclass{amsart}

\usepackage[utf8]{inputenc}

\usepackage{hypercrossref}

\usepackage{ownstyle}

\begin{document}

\makeatletter
\newcommand{\sectionNotes}{\phantomsection\section*{Notes}\addcontentsline{toc}{section}{Notes}\markright{\textsc{\@chapapp{} \thechapter{} Notes}}}
\newcommand{\sectionExercises}[1]{\phantomsection\section*{Exercises}\addcontentsline{toc}{section}{Exercises}\markright{\textsc{\@chapapp{} \thechapter{} Exercises}}}
\makeatother

\newcommand{\jdeq}{\equiv}      
\let\judgeq\jdeq
\newcommand{\defeq}{\vcentcolon\equiv}  

\newcommand{\define}[1]{\textbf{#1}}


\newcommand{\Vect}{\ensuremath{\mathsf{Vec}}}
\newcommand{\Fin}{\ensuremath{\mathsf{Fin}}}
\newcommand{\fmax}{\ensuremath{\mathsf{fmax}}}
\newcommand{\seq}[1]{\langle #1\rangle}

\def\prdsym{\textstyle\prod}
\makeatletter
\def\prd#1{\@ifnextchar\bgroup{\prd@parens{#1}}{%
    \@ifnextchar\sm{\prd@parens{#1}\@eatsm}{%
    \@ifnextchar\prd{\prd@parens{#1}\@eatprd}{%
    \@ifnextchar\;{\prd@parens{#1}\@eatsemicolonspace}{%
    \@ifnextchar\\{\prd@parens{#1}\@eatlinebreak}{%
    \@ifnextchar\narrowbreak{\prd@parens{#1}\@eatnarrowbreak}{%
      \prd@noparens{#1}}}}}}}}
\def\prd@parens#1{\@ifnextchar\bgroup%
  {\mathchoice{\@dprd{#1}}{\@tprd{#1}}{\@tprd{#1}}{\@tprd{#1}}\prd@parens}%
  {\@ifnextchar\sm%
    {\mathchoice{\@dprd{#1}}{\@tprd{#1}}{\@tprd{#1}}{\@tprd{#1}}\@eatsm}%
    {\mathchoice{\@dprd{#1}}{\@tprd{#1}}{\@tprd{#1}}{\@tprd{#1}}}}}
\def\@eatsm\sm{\sm@parens}
\def\prd@noparens#1{\mathchoice{\@dprd@noparens{#1}}{\@tprd{#1}}{\@tprd{#1}}{\@tprd{#1}}}
\def\lprd#1{\@ifnextchar\bgroup{\@lprd{#1}\lprd}{\@@lprd{#1}}}
\def\@lprd#1{\mathchoice{{\textstyle\prod}}{\prod}{\prod}{\prod}({\textstyle #1})\;}
\def\@@lprd#1{\mathchoice{{\textstyle\prod}}{\prod}{\prod}{\prod}({\textstyle #1}),\ }
\def\tprd#1{\@tprd{#1}\@ifnextchar\bgroup{\tprd}{}}
\def\@tprd#1{\mathchoice{{\textstyle\prod_{(#1)}}}{\prod_{(#1)}}{\prod_{(#1)}}{\prod_{(#1)}}}
\def\dprd#1{\@dprd{#1}\@ifnextchar\bgroup{\dprd}{}}
\def\@dprd#1{\prod_{(#1)}\,}
\def\@dprd@noparens#1{\prod_{#1}\,}

\def\@eatnarrowbreak\narrowbreak{%
  \@ifnextchar\prd{\narrowbreak\@eatprd}{%
    \@ifnextchar\sm{\narrowbreak\@eatsm}{%
      \narrowbreak}}}
\def\@eatlinebreak\\{%
  \@ifnextchar\prd{\\\@eatprd}{%
    \@ifnextchar\sm{\\\@eatsm}{%
      \\}}}
\def\@eatsemicolonspace\;{%
  \@ifnextchar\prd{\;\@eatprd}{%
    \@ifnextchar\sm{\;\@eatsm}{%
      \;}}}

\def\lam#1{{\lambda}\@lamarg#1:\@endlamarg\@ifnextchar\bgroup{.\,\lam}{.\,}}
\def\@lamarg#1:#2\@endlamarg{\if\relax\detokenize{#2}\relax #1\else\@lamvar{\@lameatcolon#2},#1\@endlamvar\fi}
\def\@lamvar#1,#2\@endlamvar{(#2\,{:}\,#1)}
\def\@lameatcolon#1:{#1}
\let\lamt\lam
\def\lamu#1{{\lambda}\@lamuarg#1:\@endlamuarg\@ifnextchar\bgroup{.\,\lamu}{.\,}}
\def\@lamuarg#1:#2\@endlamuarg{#1}

\def\fall#1{\forall (#1)\@ifnextchar\bgroup{.\,\fall}{.\,}}

\def\exis#1{\exists (#1)\@ifnextchar\bgroup{.\,\exis}{.\,}}

\def\smsym{\textstyle\sum}
\def\sm#1{\@ifnextchar\bgroup{\sm@parens{#1}}{%
    \@ifnextchar\prd{\sm@parens{#1}\@eatprd}{%
    \@ifnextchar\sm{\sm@parens{#1}\@eatsm}{%
    \@ifnextchar\;{\sm@parens{#1}\@eatsemicolonspace}{%
    \@ifnextchar\\{\sm@parens{#1}\@eatlinebreak}{%
    \@ifnextchar\narrowbreak{\sm@parens{#1}\@eatnarrowbreak}{%
        \sm@noparens{#1}}}}}}}}
\def\sm@parens#1{\@ifnextchar\bgroup%
  {\mathchoice{\@dsm{#1}}{\@tsm{#1}}{\@tsm{#1}}{\@tsm{#1}}\sm@parens}%
  {\@ifnextchar\prd%
    {\mathchoice{\@dsm{#1}}{\@tsm{#1}}{\@tsm{#1}}{\@tsm{#1}}\@eatprd}%
    {\mathchoice{\@dsm{#1}}{\@tsm{#1}}{\@tsm{#1}}{\@tsm{#1}}}}}
\def\@eatprd\prd{\prd@parens}
\def\sm@noparens#1{\mathchoice{\@dsm@noparens{#1}}{\@tsm{#1}}{\@tsm{#1}}{\@tsm{#1}}}
\def\lsm#1{\@ifnextchar\bgroup{\@lsm{#1}\lsm}{\@@lsm{#1}}}
\def\@lsm#1{\mathchoice{{\textstyle\sum}}{\sum}{\sum}{\sum}({\textstyle #1})\;}
\def\@@lsm#1{\mathchoice{{\textstyle\sum}}{\sum}{\sum}{\sum}({\textstyle #1}),\ }
\def\tsm#1{\@tsm{#1}\@ifnextchar\bgroup{\tsm}{}}
\def\@tsm#1{\mathchoice{{\textstyle\sum_{(#1)}}}{\sum_{(#1)}}{\sum_{(#1)}}{\sum_{(#1)}}}
\def\dsm#1{\@dsm{#1}\@ifnextchar\bgroup{\dsm}{}}
\def\@dsm#1{\sum_{(#1)}\,}
\def\@dsm@noparens#1{\sum_{#1}\,}

\def\wtypesym{{\mathsf{W}}}
\def\wtype#1{\@ifnextchar\bgroup%
  {\mathchoice{\@twtype{#1}}{\@twtype{#1}}{\@twtype{#1}}{\@twtype{#1}}\wtype}%
  {\mathchoice{\@twtype{#1}}{\@twtype{#1}}{\@twtype{#1}}{\@twtype{#1}}}}
\def\lwtype#1{\@ifnextchar\bgroup{\@lwtype{#1}\lwtype}{\@@lwtype{#1}}}
\def\@lwtype#1{\mathchoice{{\textstyle\mathsf{W}}}{\mathsf{W}}{\mathsf{W}}{\mathsf{W}}({\textstyle #1})\;}
\def\@@lwtype#1{\mathchoice{{\textstyle\mathsf{W}}}{\mathsf{W}}{\mathsf{W}}{\mathsf{W}}({\textstyle #1}),\ }
\def\twtype#1{\@twtype{#1}\@ifnextchar\bgroup{\twtype}{}}
\def\@twtype#1{\mathchoice{{\textstyle\mathsf{W}_{(#1)}}}{\mathsf{W}_{(#1)}}{\mathsf{W}_{(#1)}}{\mathsf{W}_{(#1)}}}
\def\dwtype#1{\@dwtype{#1}\@ifnextchar\bgroup{\dwtype}{}}
\def\@dwtype#1{\mathsf{W}_{(#1)}\,}

\newcommand{\suppsym}{{\mathsf{sup}}}
\newcommand{\supp}{\ensuremath\suppsym\xspace}

\def\wtypeh#1{\@ifnextchar\bgroup%
  {\mathchoice{\@lwtypeh{#1}}{\@twtypeh{#1}}{\@twtypeh{#1}}{\@twtypeh{#1}}\wtypeh}%
  {\mathchoice{\@@lwtypeh{#1}}{\@twtypeh{#1}}{\@twtypeh{#1}}{\@twtypeh{#1}}}}
\def\lwtypeh#1{\@ifnextchar\bgroup{\@lwtypeh{#1}\lwtypeh}{\@@lwtypeh{#1}}}
\def\@lwtypeh#1{\mathchoice{{\textstyle\mathsf{W}^h}}{\mathsf{W}^h}{\mathsf{W}^h}{\mathsf{W}^h}({\textstyle #1})\;}
\def\@@lwtypeh#1{\mathchoice{{\textstyle\mathsf{W}^h}}{\mathsf{W}^h}{\mathsf{W}^h}{\mathsf{W}^h}({\textstyle #1}),\ }
\def\twtypeh#1{\@twtypeh{#1}\@ifnextchar\bgroup{\twtypeh}{}}
\def\@twtypeh#1{\mathchoice{{\textstyle\mathsf{W}^h_{(#1)}}}{\mathsf{W}^h_{(#1)}}{\mathsf{W}^h_{(#1)}}{\mathsf{W}^h_{(#1)}}}
\def\dwtypeh#1{\@dwtypeh{#1}\@ifnextchar\bgroup{\dwtypeh}{}}
\def\@dwtypeh#1{\mathsf{W}^h_{(#1)}\,}

\makeatother

\let\setof\Set    
\newcommand{\pair}{\ensuremath{\mathsf{pair}}\xspace}
\newcommand{\tup}[2]{(#1,#2)}
\newcommand{\proj}[1]{\ensuremath{\mathsf{pr}_{#1}}\xspace}
\newcommand{\fst}{\ensuremath{\proj1}\xspace}
\newcommand{\snd}{\ensuremath{\proj2}\xspace}
\newcommand{\ac}{\ensuremath{\mathsf{ac}}\xspace} 
\newcommand{\un}{\ensuremath{\mathsf{upun}}\xspace} 

\newcommand{\rec}[1]{\mathsf{rec}_{#1}}
\newcommand{\ind}[1]{\mathsf{ind}_{#1}}
\newcommand{\indid}[1]{\ind{=_{#1}}} 
\newcommand{\indidb}[1]{\ind{=_{#1}}'} 

\newcommand{\uppt}{\ensuremath{\mathsf{uppt}}\xspace}

\newcommand{\pairpath}{\ensuremath{\mathsf{pair}^{\mathord{=}}}\xspace}
\newcommand{\projpath}[1]{\ensuremath{\apfunc{\proj{#1}}}\xspace}

\newcommand{\pairr}[1]{{\mathopen{}(#1)\mathclose{}}}
\newcommand{\Pairr}[1]{{\mathopen{}\left(#1\right)\mathclose{}}}

\newcommand{\im}{\ensuremath{\mathsf{im}}} 

\newcommand{\leftwhisker}{\mathbin{{\ct}_{\mathsf{l}}}}  
\newcommand{\rightwhisker}{\mathbin{{\ct}_{\mathsf{r}}}} 
\newcommand{\hct}{\star}

\newcommand{\modal}{\ensuremath{\ocircle}}
\let\reflect\modal
\newcommand{\modaltype}{\ensuremath{\type_\modal}}
\newcommand{\mreturn}{\ensuremath{\eta}}
\let\project\mreturn
\newcommand{\ext}{\mathsf{ext}}
\renewcommand{\P}{\ensuremath{\type_{P}}\xspace}


\newcommand{\idsym}{{=}}
\newcommand{\id}[3][]{\ensuremath{#2 =_{#1} #3}\xspace}
\newcommand{\idtype}[3][]{\ensuremath{\mathsf{Id}_{#1}(#2,#3)}\xspace}
\newcommand{\idtypevar}[1]{\ensuremath{\mathsf{Id}_{#1}}\xspace}
\newcommand{\defid}{\coloneqq}

\newcommand{\dpath}[4]{#3 =^{#1}_{#2} #4}


\newcommand{\refl}[1]{\ensuremath{\mathsf{refl}_{#1}}\xspace}

\newcommand{\ct}{%
  \mathchoice{\mathbin{\raisebox{0.5ex}{$\displaystyle\centerdot$}}}%
             {\mathbin{\raisebox{0.5ex}{$\centerdot$}}}%
             {\mathbin{\raisebox{0.25ex}{$\scriptstyle\,\centerdot\,$}}}%
             {\mathbin{\raisebox{0.1ex}{$\scriptscriptstyle\,\centerdot\,$}}}
}

\newcommand{\opp}[1]{\mathord{{#1}^{-1}}}
\let\rev\opp

\newcommand{\trans}[2]{\ensuremath{{#1}_{*}\mathopen{}\left({#2}\right)\mathclose{}}\xspace}
\let\Trans\trans
\newcommand{\transf}[1]{\ensuremath{{#1}_{*}}\xspace} 
\newcommand{\transfib}[3]{\ensuremath{\mathsf{transport}^{#1}(#2,#3)\xspace}}
\newcommand{\Transfib}[3]{\ensuremath{\mathsf{transport}^{#1}\Big(#2,\, #3\Big)\xspace}}
\newcommand{\transfibf}[1]{\ensuremath{\mathsf{transport}^{#1}\xspace}}

\newcommand{\transtwo}[2]{\ensuremath{\mathsf{transport}^2\mathopen{}\left({#1},{#2}\right)\mathclose{}}\xspace}

\newcommand{\transconst}[3]{\ensuremath{\mathsf{transportconst}}^{#1}_{#2}(#3)\xspace}
\newcommand{\transconstf}{\ensuremath{\mathsf{transportconst}}\xspace}

\newcommand{\mapfunc}[1]{\ensuremath{\mathsf{ap}_{#1}}\xspace} 
\newcommand{\map}[2]{\ensuremath{{#1}\mathopen{}\left({#2}\right)\mathclose{}}\xspace}
\let\Ap\map
\newcommand{\mapdepfunc}[1]{\ensuremath{\mathsf{apd}_{#1}}\xspace} 
\newcommand{\mapdep}[2]{\ensuremath{\mapdepfunc{#1}\mathopen{}\left(#2\right)\mathclose{}}\xspace}
\let\apfunc\mapfunc
\let\ap\map
\let\apdfunc\mapdepfunc
\let\apd\mapdep

\newcommand{\aptwofunc}[1]{\ensuremath{\mathsf{ap}^2_{#1}}\xspace}
\newcommand{\aptwo}[2]{\ensuremath{\aptwofunc{#1}\mathopen{}\left({#2}\right)\mathclose{}}\xspace}
\newcommand{\apdtwofunc}[1]{\ensuremath{\mathsf{apd}^2_{#1}}\xspace}
\newcommand{\apdtwo}[2]{\ensuremath{\apdtwofunc{#1}\mathopen{}\left(#2\right)\mathclose{}}\xspace}

\newcommand{\idfunc}[1][]{\ensuremath{\mathsf{id}_{#1}}\xspace}

\newcommand{\htpy}{\sim}

\newcommand{\bisim}{\sim}       
\newcommand{\eqr}{\sim}         

\newcommand{\eqv}[2]{\ensuremath{#1 \simeq #2}\xspace}
\newcommand{\eqvspaced}[2]{\ensuremath{#1 \;\simeq\; #2}\xspace}
\newcommand{\eqvsym}{\simeq}    
\newcommand{\texteqv}[2]{\ensuremath{\mathsf{Equiv}(#1,#2)}\xspace}
\newcommand{\isequiv}{\ensuremath{\mathsf{isequiv}}}
\newcommand{\qinv}{\ensuremath{\mathsf{qinv}}}
\newcommand{\ishae}{\ensuremath{\mathsf{ishae}}}
\newcommand{\linv}{\ensuremath{\mathsf{linv}}}
\newcommand{\rinv}{\ensuremath{\mathsf{rinv}}}
\newcommand{\biinv}{\ensuremath{\mathsf{biinv}}}
\newcommand{\lcoh}[3]{\mathsf{lcoh}_{#1}(#2,#3)}
\newcommand{\rcoh}[3]{\mathsf{rcoh}_{#1}(#2,#3)}
\newcommand{\hfib}[2]{{\mathsf{fib}}_{#1}(#2)}

\newcommand{\total}[1]{\ensuremath{\mathsf{total}(#1)}}

\newcommand{\UU}{\ensuremath{\mathcal{U}}\xspace}
\let\bbU\UU
\let\type\UU
\newcommand{\typele}[1]{\ensuremath{{#1}\text-\mathsf{Type}}\xspace}
\newcommand{\typeleU}[1]{\ensuremath{{#1}\text-\mathsf{Type}_\UU}\xspace}
\newcommand{\typelep}[1]{\ensuremath{{(#1)}\text-\mathsf{Type}}\xspace}
\newcommand{\typelepU}[1]{\ensuremath{{(#1)}\text-\mathsf{Type}_\UU}\xspace}
\let\ntype\typele
\let\ntypeU\typeleU
\let\ntypep\typelep
\let\ntypepU\typelepU
\renewcommand{\set}{\ensuremath{\mathsf{Set}}\xspace}
\newcommand{\setU}{\ensuremath{\mathsf{Set}_\UU}\xspace}
\newcommand{\prop}{\ensuremath{\mathsf{Prop}}\xspace}
\newcommand{\propU}{\ensuremath{\mathsf{Prop}_\UU}\xspace}
\newcommand{\pointed}[1]{\ensuremath{#1_\bullet}}

\newcommand{\card}{\ensuremath{\mathsf{Card}}\xspace}
\newcommand{\ord}{\ensuremath{\mathsf{Ord}}\xspace}
\newcommand{\ordsl}[2]{{#1}_{/#2}}

\newcommand{\ua}{\ensuremath{\mathsf{ua}}\xspace} 
\newcommand{\idtoeqv}{\ensuremath{\mathsf{idtoeqv}}\xspace}
\newcommand{\univalence}{\ensuremath{\mathsf{univalence}}\xspace} 

\newcommand{\iscontr}{\ensuremath{\mathsf{isContr}}}
\newcommand{\contr}{\ensuremath{\mathsf{contr}}} 
\newcommand{\isset}{\ensuremath{\mathsf{isSet}}}
\newcommand{\isprop}{\ensuremath{\mathsf{isProp}}}

\let\hfiber\hfib

\newcommand{\trunc}[2]{\mathopen{}\left\Vert #2\right\Vert_{#1}\mathclose{}}
\newcommand{\ttrunc}[2]{\bigl\Vert #2\bigr\Vert_{#1}}
\newcommand{\Trunc}[2]{\Bigl\Vert #2\Bigr\Vert_{#1}}
\newcommand{\truncf}[1]{\Vert \blank \Vert_{#1}}
\newcommand{\tproj}[3][]{\mathopen{}\left|#3\right|_{#2}^{#1}\mathclose{}}
\newcommand{\tprojf}[2][]{|\blank|_{#2}^{#1}}
\def\pizero{\trunc0}

\newcommand{\brck}[1]{\trunc{}{#1}}
\newcommand{\bbrck}[1]{\ttrunc{}{#1}}
\newcommand{\Brck}[1]{\Trunc{}{#1}}
\newcommand{\bproj}[1]{\tproj{}{#1}}
\newcommand{\bprojf}{\tprojf{}}

\newcommand{\Parens}[1]{\Bigl(#1\Bigr)}

\let\extendsmb\ext
\newcommand{\extend}[1]{\extendsmb(#1)}

%
\newcommand{\emptyt}{\ensuremath{\mathbf{0}}\xspace}

\newcommand{\unit}{\ensuremath{\mathbf{1}}\xspace}
\newcommand{\ttt}{\ensuremath{\star}\xspace}

\newcommand{\bool}{\ensuremath{\mathbf{2}}\xspace}
\newcommand{\btrue}{{1_{\bool}}}
\newcommand{\bfalse}{{0_{\bool}}}

\newcommand{\inlsym}{{\mathsf{inl}}}
\newcommand{\inrsym}{{\mathsf{inr}}}
\newcommand{\inl}{\ensuremath\inlsym\xspace}
\newcommand{\inr}{\ensuremath\inrsym\xspace}

\newcommand{\seg}{\ensuremath{\mathsf{seg}}\xspace}

\newcommand{\freegroup}[1]{F(#1)}
\newcommand{\freegroupx}[1]{F'(#1)} 

\newcommand{\glue}{\mathsf{glue}}

\newcommand{\colim}{\mathsf{colim}}
\newcommand{\inc}{\mathsf{inc}}
\newcommand{\cmp}{\mathsf{cmp}}

\newcommand{\Sn}{\mathbb{S}}
\newcommand{\base}{\ensuremath{\mathsf{base}}\xspace}
\newcommand{\lloop}{\ensuremath{\mathsf{loop}}\xspace}
\newcommand{\surf}{\ensuremath{\mathsf{surf}}\xspace}

\newcommand{\susp}{\Sigma}
\newcommand{\north}{\mathsf{N}}
\newcommand{\south}{\mathsf{S}}
\newcommand{\merid}{\mathsf{merid}}

\newcommand{\blank}{\mathord{\hspace{1pt}\text{--}\hspace{1pt}}}

\newcommand{\nameless}{\mathord{\hspace{1pt}\underline{\hspace{1ex}}\hspace{1pt}}}

\newcommand{\bbP}{\ensuremath{\mathbb{P}}\xspace}

\newcommand{\uset}{\ensuremath{\mathcal{S}et}\xspace}
\newcommand{\ucat}{\ensuremath{{\mathcal{C}at}}\xspace}
\newcommand{\urel}{\ensuremath{\mathcal{R}el}\xspace}
\newcommand{\uhilb}{\ensuremath{\mathcal{H}ilb}\xspace}
\newcommand{\utype}{\ensuremath{\mathcal{T}\!ype}\xspace}

\newbox\pbbox
\setbox\pbbox=\hbox{\xy \POS(65,0)\ar@{-} (0,0) \ar@{-} (65,65)\endxy}
\def\pb{\save[]+<3.5mm,-3.5mm>*{\copy\pbbox} \restore}

\newcommand{\inv}[1]{{#1}^{-1}}
\newcommand{\idtoiso}{\ensuremath{\mathsf{idtoiso}}\xspace}
\newcommand{\isotoid}{\ensuremath{\mathsf{isotoid}}\xspace}
\newcommand{\op}{^{\mathrm{op}}}
\newcommand{\y}{\ensuremath{\mathbf{y}}\xspace}
\newcommand{\dgr}[1]{{#1}^{\dagger}}
\newcommand{\unitaryiso}{\mathrel{\cong^\dagger}}
\newcommand{\cteqv}[2]{\ensuremath{#1 \simeq #2}\xspace}
\newcommand{\cteqvsym}{\simeq}     

\newcommand{\N}{\ensuremath{\mathbb{N}}\xspace}
\let\nat\N
\newcommand{\natp}{\ensuremath{\nat'}\xspace} 

\newcommand{\zerop}{\ensuremath{0'}\xspace}   
\newcommand{\suc}{\mathsf{succ}}
\newcommand{\sucp}{\ensuremath{\suc'}\xspace} 
\newcommand{\add}{\mathsf{add}}
\newcommand{\ack}{\mathsf{ack}}
\newcommand{\ite}{\mathsf{iter}}
\newcommand{\assoc}{\mathsf{assoc}}
\newcommand{\dbl}{\ensuremath{\mathsf{double}}}
\newcommand{\dblp}{\ensuremath{\dbl'}\xspace} 

\newcommand{\lst}[1]{\mathsf{List}(#1)}
\newcommand{\nil}{\mathsf{nil}}
\newcommand{\cons}{\mathsf{cons}}
\newcommand{\lost}[1]{\mathsf{Lost}(#1)}

\newcommand{\vect}[2]{\ensuremath{\mathsf{Vec}_{#1}(#2)}\xspace}

\newcommand{\Z}{\ensuremath{\mathbb{Z}}\xspace}
\newcommand{\Zsuc}{\mathsf{succ}}
\newcommand{\Zpred}{\mathsf{pred}}

\newcommand{\Q}{\ensuremath{\mathbb{Q}}\xspace}

\newcommand{\funext}{\mathsf{funext}}
\newcommand{\happly}{\mathsf{happly}}

\newcommand{\com}[3]{\mathsf{swap}_{#1,#2}(#3)}

\newcommand{\code}{\ensuremath{\mathsf{code}}\xspace}
\newcommand{\encode}{\ensuremath{\mathsf{encode}}\xspace}
\newcommand{\decode}{\ensuremath{\mathsf{decode}}\xspace}

\newcommand{\function}[4]{\left\{\begin{array}{rcl}#1 &
      \longrightarrow & #2 \\ #3 & \longmapsto & #4 \end{array}\right.}

\newcommand{\cone}[2]{\mathsf{cone}_{#1}(#2)}
\newcommand{\cocone}[2]{\mathsf{cocone}_{#1}(#2)}
\newcommand{\composecocone}[2]{#1\circ#2}
\newcommand{\composecone}[2]{#2\circ#1}
\newcommand{\Ddiag}{\mathscr{D}}

\newcommand{\Map}{\mathsf{Map}}

\newcommand{\interval}{\ensuremath{I}\xspace}
\newcommand{\izero}{\ensuremath{0_{\interval}}\xspace}
\newcommand{\ione}{\ensuremath{1_{\interval}}\xspace}

\newcommand{\epi}{\ensuremath{\twoheadrightarrow}}
\newcommand{\mono}{\ensuremath{\rightarrowtail}}

\newcommand{\bin}{\ensuremath{\mathrel{\widetilde{\in}}}}

\newcommand{\semigroupstrsym}{\ensuremath{\mathsf{SemigroupStr}}}
\newcommand{\semigroupstr}[1]{\ensuremath{\mathsf{SemigroupStr}}(#1)}
\newcommand{\semigroup}[0]{\ensuremath{\mathsf{Semigroup}}}

\newcommand{\emptyctx}{\ensuremath{\cdot}}
\newcommand{\production}{\vcentcolon\vcentcolon=}
\newcommand{\conv}{\downarrow}
\newcommand{\ctx}{\ensuremath{\mathsf{ctx}}}
\newcommand{\wfctx}[1]{#1\ \ctx}
\newcommand{\oftp}[3]{#1 \vdash #2 : #3}
\newcommand{\jdeqtp}[4]{#1 \vdash #2 \jdeq #3 : #4}
\newcommand{\judg}[2]{#1 \vdash #2}
\newcommand{\tmtp}[2]{#1 \mathord{:} #2}

\newcommand{\form}{\textsc{form}}
\newcommand{\intro}{\textsc{intro}}
\newcommand{\elim}{\textsc{elim}}
\newcommand{\uniq}{\textsc{uniq}}
\newcommand{\Weak}{\mathsf{Wkg}}
\newcommand{\Vble}{\mathsf{Vble}}
\newcommand{\Exch}{\mathsf{Exch}}
\newcommand{\Subst}{\mathsf{Subst}}

\newcommand{\cc}{\mathsf{c}}
\newcommand{\pp}{\mathsf{p}}
\newcommand{\cct}{\widetilde{\mathsf{c}}}
\newcommand{\ppt}{\widetilde{\mathsf{p}}}
\newcommand{\Wtil}{\ensuremath{\widetilde{W}}\xspace}

\newcommand{\istype}[1]{\mathsf{is}\mbox{-}{#1}\mbox{-}\mathsf{type}}
\newcommand{\nplusone}{\ensuremath{(n+1)}}
\newcommand{\nminusone}{\ensuremath{(n-1)}}
\newcommand{\fact}{\mathsf{fact}}

\newcommand{\kbar}{\overline{k}} 

\newcommand{\natw}{\ensuremath{\mathbf{N^w}}\xspace}
\newcommand{\zerow}{\ensuremath{0^\mathbf{w}}\xspace}
\newcommand{\sucw}{\ensuremath{\mathsf{succ}^{\mathbf{w}}}\xspace}
\newcommand{\nalg}{\nat\mathsf{Alg}}
\newcommand{\nhom}{\nat\mathsf{Hom}}
\newcommand{\ishinitw}{\mathsf{isHinit}_{\mathsf{W}}}
\newcommand{\ishinitn}{\mathsf{isHinit}_\nat}
\newcommand{\w}{\mathsf{W}}
\newcommand{\walg}{\w\mathsf{Alg}}
\newcommand{\whom}{\w\mathsf{Hom}}

\newcommand{\RC}{\ensuremath{\mathbb{R}_\mathsf{c}}\xspace} 
\newcommand{\RD}{\ensuremath{\mathbb{R}_\mathsf{d}}\xspace} 
\newcommand{\R}{\ensuremath{\mathbb{R}}\xspace}           
\newcommand{\barRD}{\ensuremath{\bar{\mathbb{R}}_\mathsf{d}}\xspace} 

\newcommand{\close}[1]{\sim_{#1}} 
\newcommand{\closesym}{\mathord\sim}
\newcommand{\rclim}{\mathsf{lim}} 
\newcommand{\rcrat}{\mathsf{rat}} 
\newcommand{\rceq}{\mathsf{eq}_{\RC}} 
\newcommand{\CAP}{\mathcal{C}}    
\newcommand{\Qp}{\Q_{+}}
\newcommand{\apart}{\mathrel{\#}}  
\newcommand{\dcut}{\mathsf{isCut}}  
\newcommand{\cover}{\triangleleft} 
\newcommand{\intfam}[3]{(#2, \lam{#1} #3)} 

\newcommand{\bsim}{\frown}
\newcommand{\bbsim}{\smile}

\newcommand{\hapx}{\diamondsuit\approx}
\newcommand{\hapname}{\diamondsuit}
\newcommand{\hapxb}{\heartsuit\approx}
\newcommand{\hapbname}{\heartsuit}
\newcommand{\tap}[1]{\bullet\approx_{#1}\triangle}
\newcommand{\tapname}{\triangle}
\newcommand{\tapb}[1]{\bullet\approx_{#1}\square}
\newcommand{\tapbname}{\square}

\newcommand{\NO}{\ensuremath{\mathsf{No}}\xspace}
\newcommand{\surr}[2]{\{\,#1\,\big|\,#2\,\}}
\newcommand{\LL}{\mathcal{L}}
\newcommand{\RR}{\mathcal{R}}
\newcommand{\noeq}{\mathsf{eq}_{\NO}} 

\newcommand{\ble}{\trianglelefteqslant}
\newcommand{\blt}{\vartriangleleft}
\newcommand{\bble}{\sqsubseteq}
\newcommand{\bblt}{\sqsubset}

\newcommand{\hle}{\diamondsuit\preceq}
\newcommand{\hlt}{\diamondsuit\prec}
\newcommand{\hlname}{\diamondsuit}
\newcommand{\hleb}{\heartsuit\preceq}
\newcommand{\hltb}{\heartsuit\prec}
\newcommand{\hlbname}{\heartsuit}
\newcommand{\tle}{\triangle\preceq}
\newcommand{\tlt}{\triangle\prec}
\newcommand{\tlname}{\triangle}
\newcommand{\tleb}{\square\preceq}
\newcommand{\tltb}{\square\prec}
\newcommand{\tlbname}{\square}

\newcommand{\vset}{\mathsf{set}}  
\def\cd{\tproj0}
\newcommand{\inj}{\ensuremath{\mathsf{inj}}} 
\newcommand{\acc}{\ensuremath{\mathsf{acc}}} 

\newcommand{\atMostOne}{\mathsf{atMostOne}}

\newcommand{\power}[1]{\mathcal{P}(#1)} 
\newcommand{\powerp}[1]{\mathcal{P}_+(#1)} 


\numberwithin{equation}{section}




\newcommand{\Coq}{\textsc{Coq}\xspace}
\newcommand{\Agda}{\textsc{Agda}\xspace}
\newcommand{\NuPRL}{\textsc{NuPRL}\xspace}
\newcommand{\agdabase}{\textsc{agda-base}\xspace}
\newcommand{\mtypes}{\textsc{m-types}\xspace}




\newcommand{\footstyle}[1]{{\hyperpage{#1}}n} 
\newcommand{\defstyle}[1]{\textbf{\hyperpage{#1}}}  

\newcommand{\indexdef}[1]{\index{#1|defstyle}}   
\newcommand{\indexfoot}[1]{\index{#1|footstyle}} 
\newcommand{\indexsee}[2]{\index{#1|see{#2}}}    


\newcommand{\ZF}{Zermelo--Fraenkel}
\newcommand{\CZF}{Constructive \ZF{} Set Theory}

\newcommand{\LEM}[1]{\ensuremath{\mathsf{LEM}_{#1}}\xspace}
\newcommand{\choice}[1]{\ensuremath{\mathsf{AC}_{#1}}\xspace}


\title{Non-wellfounded trees in Homotopy Type Theory}

\author{Benedikt Ahrens} 
\author{Paolo Capriotti}
\author{R\'egis Spadotti}

\thanks{The work of Benedikt Ahrens was partially supported by the CIMI (Centre International de Mathématiques et d'Informatique)
Excellence program ANR-11-LABX-0040-CIMI within the program ANR-11-IDEX-0002-02 during a postdoctoral fellowship.}

\begin{abstract}
 We prove a conjecture about the constructibility of \emph{coinductive types}---in the principled form of \emph{indexed $\M$-types}---in Homotopy Type Theory.
 The conjecture says that in the presence of \emph{inductive} types, coinductive types are derivable.
 Indeed, in this work, we construct coinductive types in a subsystem of Homotopy Type Theory; 
 this subsystem is given by Intensional Martin-Löf type theory with natural numbers and Voevodsky's Univalence Axiom.
 Our results are mechanized in the computer proof assistant \Agda.  
\end{abstract}
  
\maketitle

\section{Introduction}

Coinductive data types are used in functional programming to represent infinite data structures.
Examples include the ubiquitous data type of streams over a given base type, but also more sophisticated types; 
as an example we present an alternative definition of equivalence of types (\Cref{ex:equivalence}).

From a categorical perspective, coinductive types are characterized by a \emph{universal property},
which specifies the object with that property \emph{uniquely} in a suitable sense.
More precisely, a coinductive type is specified as the \emph{terminal coalgebra} of a suitable endofunctor.
In this category-theoretic viewpoint, coinductive types are dual to \emph{inductive} types, which are 
defined as initial algebras.

Inductive, resp.\ coinductive, types are usually considered in the principled form of the family of $\W$-types, resp.\ $\M$-types, 
parametrized by a type $A$ and a dependent type family $B$ over $A$, that is, a family of types $(B(a))_{a:A}$. 
Intuitively, the elements of the coinductive type $\M(A,B)$ are trees with nodes labeled by elements of $A$ 
such that a node labeled by $a:A$ has $B(a)$-many subtrees, given by a map $B(a) \to \M(A,B)$; see \Cref{fig:tree} for an example.
The \emph{inductive} type $\W(A,B)$  contains only trees where any path within that tree
eventually leads to a \emph{leaf}, that is, to a node $a:A$ such that $B(a)$ is empty.

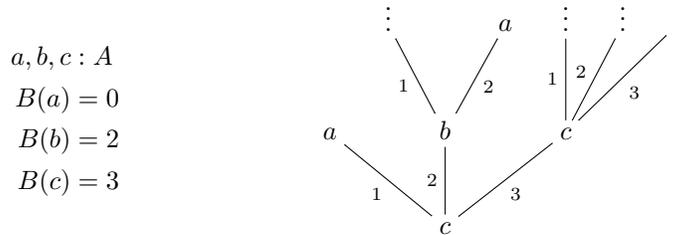
\begin{figure}[htb]
\begin{minipage}{0.3\textwidth}
 \begin{align*}
 a,b,c &: A\\
  B(a) &= 0\\
  B(b) &= 2\\
  B(c) &= 3
\end{align*}
\end{minipage}
\begin{minipage}{0.6\textwidth}
 \[
 \begin{xy}
  \xymatrix@C=1pc {
               & \vdots &  & a & \vdots & \vdots & \vdots    \\
            a  & & b \ar@{-}[lu]^{1} \ar@{-}[ru]_{2}& & c\ar@{-}[u]^{1}\ar@{-}[ru]^{2}\ar@{-}[rru]_{3} & &    \\
               & & c \ar@{-}[llu]^{1} \ar@{-}[u]^{2} \ar@{-}[rru]_{3} & & & &    
   }
 \end{xy}
\]
\end{minipage}
 \caption{Example of a tree (adapted from \cite{DBLP:journals/apal/BergM07})} \label{fig:tree}
\end{figure}

In this work, we study coinductive types in Homotopy Type Theory (HoTT). 
HoTT is an extension of intensional Martin-Löf type theory \cite{martin_lof}; we give a brief overview in \Cref{sec:hott}.

The universal properties defining inductive and coinductive types, respectively, 
can be expressed internally to intensional Martin-Löf type theory (and thus internally to HoTT).
Awodey, Gambino, and Sojakova \cite{DBLP:conf/lics/AwodeyGS12} use this facility when proving, within a subtheory $\mathcal{H}$ of HoTT, a logical equivalence between
\begin{enumerate}
 \item the existence of $\W$-types (a.k.a.\ the existence of a universal object) and
 \item the addition of a set of type-theoretic rules to their \enquote{base theory} $\mathcal{H}$. \label{enum:rules_ind_types}
\end{enumerate}
We might call the $\W$-types defined internally \enquote{internal $\W$-types}, and those specified via type-theoretic rules \enquote{external} ones.
In that sense, Awodey, Gambino, and Sojakova \cite{DBLP:conf/lics/AwodeyGS12} prove a logical equivalence between the existence of internal and external $\W$-types.

The universal property defining (internal) coinductive types in HoTT is dual to the one defining (internal) inductive types. 
One might hence assume that their existence is equivalent to a set of type-theoretic rules dual (in a suitable sense) to those given for external $\W$-types as in \cref{enum:rules_ind_types} above.
However, the rules for external $\W$-types cannot be dualized in a na\"ive way, due to some asymmetry of HoTT related to dependent types as maps into a \enquote{type of types} (a \emph{universe}),
see the discussion in \cite{hott-mailing-coind}.

In this work, we show instead that coinductive types in the form of $\M$-types can be derived from certain inductive types.
(More precisely, only one specific $\W$-type is needed: the type of natural numbers, which is readily specified as a $\W$-type \cite{DBLP:conf/lics/AwodeyGS12}.)

The result presented in this work is not surprising; indeed, the constructibility of coinductive types from inductive types 
has been shown in extensional type theory (see \Cref{sec:rel-work}) and
was conjectured to work in HoTT during a discussion on the HoTT mailing list \cite{hott-mailing-coind}.
In this work, we give a formal proof of the constructibility of a class of coinductive types from inductive types, with a proof of correctness of the construction.

The theorem we prove here is actually more general than described above: instead of plain $\M$-types as described above, we construct \emph{indexed} $\M$-types, which 
can be considered as a form of \enquote{(simply-)typed} trees, typed over a type of indices $I$. Plain $\M$-types then correspond to the mono-typed indexed $\M$-types,
that is, to those for which $I = 1$.
Since all the ideas are already contained in the case of plain $\M$-types, we describe the construction of those extensively, and only briefly state the definitions
and the main result for the indexed case. The formalisation in \Agda, however, is done for the more general, indexed, case.
An example illustrates the need for these more general \emph{indexed} $\M$-types.

\subsection{Related work}\label{sec:rel-work}
 Inductive types in the form of $\W$-types in HoTT have been studied by Awodey, Gambino, and Sojakova \cite{DBLP:conf/lics/AwodeyGS12}. The content of that work is described above.

 Van den Berg and De Marchi \cite{DBLP:journals/apal/BergM07} study the existence of \emph{plain} $\M$-types in models of \emph{extensional} type theory, that is, of type
 theory with a reflection rule identifying propositional and judgmental equality.
 They prove the derivability of $\M$-types from $\W$-types in such models, see Corollary 2.5 of the arXiv version of that article.
 A construction in extensional type theory of $\M$-types from $\W$-types is given by Abbott, Altenkirch, and Ghani \cite{DBLP:journals/tcs/AbbottAG05}.
 
 Martin-Löf type theory without identity reflection, but with the principle of \emph{Uniqueness of Identity Proofs} (Axiom K) can be identified with 
 the 0-truncated fragment of HoTT (modulo the assumption of univalence and HITs). For such a type theory,
 a construction of (indexed) $\M$-types from $\W$-types is described by Altenkirch et al.~\cite{indexed_containers},
 internalizing a standard result in 1-category theory \cite{DBLP:journals/tcs/Barr93}.
 The present work thus generalizes the construction described in \cite{indexed_containers} by extending it from the 0-truncated fragment to the whole of HoTT.
 More specifically, the main work in this generalization is to develop higher-categorical variants of the 1-categorical constructions used in 
 \cite{indexed_containers} that are compatible with the higher-categorical structure (the coherence data) of types.
\subsection{Synopsis}

The paper is organized as follows:
In \Cref{sec:hott} we present the type theory we are working in---a \enquote{subsystem} of HoTT as presented in \cite{hottbook}.
In \Cref{sec:m-in-hott} we define signatures for \emph{plain} $\M$-types and, via a universal property, the $\M$-type associated to a given signature.
In \Cref{sec:derive-m} we construct the $\M$-type of a given signature.
In \Cref{sec:indexed} we state the main result for the case of \emph{indexed} $\M$-types.
Finally, in \Cref{sec:formal} we give an overview of the formalisation of our result in the proof assistant \Agda.

\subsection*{Acknowledgments}
We are grateful to many people for helpful discussions about coinductive types, online as well as offline: 
 Thorsten Altenkirch, 
 Steve Awodey, 
 Mart\'{i}n Escard\'{o}, 
 Nicolai Kraus,
 Peter LeFanu Lumsdaine,
 Ralph Matthes, 
 Paige North,
 Mike Shulman, and
 Vladimir Voevodsky.
We thank Nicolai Kraus, Peter LeFanu Lumsdaine and Paige North for suggesting improvements and clarifications for this paper.

\section{The type theory under consideration}\label{sec:hott}

The present work takes place within a type theory that is a subsystem of the type theory presented in the HoTT book \cite{hottbook}.
The latter is often referred to as Homotopy Type Theory (HoTT); it is an extension of intensional Martin-L\"of type theory (IMLTT) \cite{martin_lof}.
The extension is given by two data: firstly, the \emph{Univalence Axiom}, introduced by Vladimir Voevodsky and proven consistent
with IMLTT in the simplicial set model \cite{simp_set_model}.
The second extension is given by \emph{Higher Inductive Types} (HITs), the precise theory of which is still subject to active research.
Preliminary results on HITs have been worked out by Sojakova \cite{DBLP:conf/popl/Sojakova15} and 
Lumsdaine and Shulman---see \cite[Chap.\ 6]{hottbook} for an introduction.
In the present work, we use the Univalence Axiom, but do not make use of HITs.

The syntax of HoTT is extensively described in a book \cite{hottbook}; 
we only give a brief summary of the type constructors used in the present work, thus fixing notation.
The fundamental objects are \emph{types}, which have \emph{elements} (\enquote{inhabitants}), written $a:A$.
Types can be dependent on terms, which we write as $x:A\entails B(x) : \type$. In the preceding judgment, we use a special type $\type$, 
the \enquote{universe} or \enquote{type of types}. In this work we  assume any type being an element of $\type$ 
in the sense of \enquote{typical ambiguity} \cite[Chap.\ 1.3]{hottbook}, without worrying about 
universe levels. The formalisation in \Agda ensures that everything works fine in that respect: 
as we will see later, the universe $\type$ is closed under the construction of $\M$-types.

We use the following type constructors: dependent products $\prd{x:A}B(x)$, with non-dependent variant written $A \to B$, 
dependent sums $\sm{x:A}B(x)$ with non-dependent variant written $A \times B$, the identity type $\id[A]{x}{y}$ and
the coproduct type $A + B$.
In particular, we assume the empty type $0$ and the singleton type $1$.
Furthermore, we assume the existence of a type of natural numbers, given as an inductive type according to the rules given in \cite[Chap.\ 1.9]{hottbook}.
Finally, we assume the univalence axiom for the universe $\type$ as presented in \cite[Chap.\ 2.10]{hottbook}.

Concerning terms, function application is denoted by parentheses as in $f(x)$ or, occasionally, simply by juxtaposition. 
We write dependent pairs as $(a,b)$ for $b : B(a)$. Projections are indicated by a subscript, that is,
for $x : \sm{a:A}B(a)$ we have $x_0 : A$ and $x_1 : B(x_0)$.
Indices are also used occasionally to specify earlier arguments of a function of several arguments; 
e.g., we write $B_i(a)$ instead of $B(i)(a)$.

We conclude this brief introduction by recalling two important internally definable properties of types:
  we call the type $X$ \fat{contractible}, if $X$ is inhabited by a unique element, that is, if the following type is inhabited:
  \[ \iscontr(X) := \sm{x : X} \prd{x' : X} x' = x \enspace . \]
  We call the type $Y$ \fat{a proposition} if for all $y,y':Y$, we have $y = y'$.
  Note that a type $X$ is contractible iff $X$ is a proposition and there is an element $x:X$ (see also \cite[Lemma 3.11.3]{hottbook}).

\section{Definition of $\M$-types via universal property}\label{sec:m-in-hott}

\emph{Coinductive} types represent \emph{potentially infinite} data structures, 
such as streams or infinite lists.
As such, they have to be contrasted to \emph{inductive} datatypes, which represent structures that are \emph{necessarily finite},
such as unary natural numbers or finite lists.

\subsection{Signatures, a.k.a.\ containers}

In order to analyze inductive and coinductive types systematically, one usually fixes a notion of \enquote{signature}: 
a signature specifies, in an abstract way, the rules according to which the instances of a data structure are built.
In the following, we consider signatures to be given by \enquote{containers} \cite{indexed_containers}:
\begin{definition}
A \fat{container} (or \fat{signature}) is a pair $(A,B)$ of a type $A$ and a dependent type $x:A\entails B(x):\type$ over $A$.
\end{definition}
The container $(A,B)$ then determines a type of \enquote{trees} built as follows:
such a tree consists of a \emph{root node}, labeled by an element $a:A$, and a family of trees---\enquote{subtrees} of the original tree---indexed by the type $B(a)$.
A tree is \emph{well-founded} if it does not have an infinite chain of subtrees.

To the container $(A,B)$ one associates two types of trees built according to those rules: the type $\W(A,B)$ of well-founded trees, 
and the type $\M(A,B)$ of all trees, i.e., not necessarily well-founded.

The description of the inhabitants of $\W(A,B)$ and $\M(A,B)$ in terms of trees gives a suitable intuition; formally, those types are defined in terms of a
\emph{universal property}.
Indeed, $\M(A,B)$ will be defined as (the carrier of) a terminal object in a suitable sense.

%

\subsection{Coalgebras for a signature}

Any container $(A,B)$ specifies an endomorphism on types as follows:

\begin{definition}\label{def:poly_functor}
 Given a container $(A,B)$, define the \fat{polynomial functor} $P : \type \to \type$ associated to $(A,B)$ as
 \[P(X):=P_{A,B}(X):= \sm{a:A}(B(a) \to X) \enspace .\] 
 Given a map $f : X \to Y$, define $Pf: PX \to PY$ as the map
 \[ Pf(a,g) := (a, f \circ g) \enspace . \]
\end{definition}

Note that \Cref{def:poly_functor} does not really define a functor, and, more fundamentally, the universe $\type$ is not a (pre-)category
in the sense of \cite{rezk_completion}.
Instead, the appropriate notion for $P$ would be an $\infty$-(endo)functor on the $(\infty,1)$-category $\type$ \cite{james_cranch}.
However, we do not attempt to make any of these notions precise, and do not make use of any \enquote{functorial} properties of the defined maps.
Our use of the word \enquote{functor} merely indicates an analogy to the 1-categorical case.

To any signature $S=(A,B)$ we associate a type of 
\emph{coalgebras} $\Coalg_{S}$, and a family of types of morphisms between them:

\begin{definition}\label{def:coalgebra}
 Given a signature $S = (A,B)$ as in \Cref{def:poly_functor}, an \fat{$S$-coalgebra} is defined to be a pair $(C,\gamma)$ consisting of 
 a type $C:\type$ and a map $\gamma:C \to P_{S}C$.
 A map of coalgebras from $(C,\gamma)$ to $(D,\delta)$ is defined to be a pair $(f,p)$ of a map $f : C\to D$ and a path $p : \comp{f}{\delta} = \comp{\gamma}{P_Sf} $.
 Put differently, we set
  \[ \Coalg_S := \sm{C:\type} C\to PC \] and
  \[ \Coalg_S\Bigl( (C,\gamma),(D,\delta) \Bigr) := \sm{f : C\to D} \comp{f}{\delta} = \comp{\gamma}{Pf}  \enspace . \]
\end{definition}

\noindent
There is an obvious composition of coalgebra morphisms, and the identity map $C \to C$ is the carrier of a coalgebra endomorphism on $(\C,\gamma)$.
We also write $\Coalgmor{(C,\gamma)}{(D,\delta)}$ for the type of coalgebra morphisms from $(C,\gamma)$ to $(D,\delta)$.

\subsection{What is an $\M$-type?}

In this section we define (internal) $\M$-types in HoTT via a universal property.

\begin{definition}\label{def:m-type}
 Given a container $(A,B)$, \fat{the (internal) $\M$-type $\M_{A,B}$ associated to $(A,B)$} 
 is defined to be the pair $(M,\out : M \to P_{A,B}M)$ with the following universal 
 property: for any coalgebra $(C,\gamma) : \Coalg_{(A,B)}$ of $(A,B)$, the type of coalgebra morphisms $\Coalgmor{(C,\gamma)}{(M,\out)}$ from $(C,\gamma)$ to $(M,\out)$ is contractible.
\end{definition}

The use of the definite article in \Cref{def:m-type} is justified by the following lemma:
\begin{lemma}\label{lem:is-prop-final}
 The type 
  \[\Final_S:= \sm{(X,\rho):\Coalg_S}\prd{(C,\gamma):\Coalg_S}\iscontr\big(\Coalgmor{(C,\gamma)}{(X,\rho)}\big)\] 
 is a proposition.
\end{lemma}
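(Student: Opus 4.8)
The plan is to exhibit $\Final_S$ as a subtype of $\Coalg_S$ and then show that this subtype contains at most one element. First I would observe that the predicate
\[
\Phi(X,\rho) \defeq \prd{(C,\gamma):\Coalg_S}\iscontr\big(\Coalgmor{(C,\gamma)}{(X,\rho)}\big)
\]
is a proposition for every coalgebra $(X,\rho)$: the type $\iscontr(Y)$ is always a proposition \cite[Lemma 3.11.4]{hottbook}, and a dependent product of propositions is again a proposition (using function extensionality, which follows from univalence). Hence $\Final_S\jdeq\sm{(X,\rho):\Coalg_S}\Phi(X,\rho)$ is a $\Sigma$-type with propositional fibres, so it is a proposition as soon as we can show that any two of its elements have equal underlying coalgebras: once the carrier-level path is produced, the accompanying path in the (propositional) fibre exists automatically, and the two are assembled by the characterization of paths in $\Sigma$-types.

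So let $((X,\rho),u)$ and $((X',\rho'),u')$ be two elements of $\Final_S$; using the terminality witnesses $u:\Phi(X,\rho)$ and $u':\Phi(X',\rho')$ I must produce a path $(X,\rho)=(X',\rho')$ in $\Coalg_S$. Terminality of $(X',\rho')$ gives a coalgebra morphism $(f,p_f):\Coalgmor{(X,\rho)}{(X',\rho')}$, and terminality of $(X,\rho)$ gives $(g,p_g):\Coalgmor{(X',\rho')}{(X,\rho)}$. Composing these coalgebra morphisms yields an element of $\Coalgmor{(X,\rho)}{(X,\rho)}$, as does the identity coalgebra morphism; but terminality of $(X,\rho)$ makes $\Coalgmor{(X,\rho)}{(X,\rho)}$ contractible, hence a proposition, so the composite equals the identity. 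Projecting onto carriers (the first projection of a coalgebra morphism respects composition and identities) gives $g\circ f=\idfunc[X]$, and symmetrically $f\circ g=\idfunc[X']$. Thus $g$ is a two-sided inverse of $f$, so the underlying map $f:X\to X'$ is an equivalence, and univalence supplies a path $p\defeq\ua(f):X=_{\type}X'$.

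Finally I would upgrade $p$ to a path of coalgebras. Writing $E(C)\defeq(C\to PC)$ for the fibration whose total space is $\Coalg_S$, it suffices to give a path $\transfib{E}{p}{\rho}=\rho'$. Computing the transport along $p=\ua(f)$ in $E$ by the univalence computation rule---together with the behaviour of transport on function types and the action of $P$ on maps---identifies $\transfib{E}{p}{\rho}$ with $Pf\circ\rho\circ f^{-1}$; and postcomposing the coalgebra-morphism square $p_f$ (which asserts $\rho'\circ f=Pf\circ\rho$) with $f^{-1}$, using $f\circ f^{-1}=\idfunc[X']$, yields exactly $\rho'=Pf\circ\rho\circ f^{-1}$. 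This provides the required fibre path and hence the equality of coalgebras. I expect the main obstacle to be precisely this last transport computation: one has to unfold transport in $E$ correctly, track which occurrences of $f$ become $f^{-1}$ under univalence, and match the resulting coherence against $p_f$, while being careful with the composition conventions of \Cref{def:coalgebra}.
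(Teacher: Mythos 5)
Your proposal is correct and follows essentially the same route as the paper's proof: the standard argument that two terminal coalgebras have equivalent carriers (via the unique morphisms in both directions and contractibility of the endomorphism type), then univalence to turn the equivalence into a path, then a transport computation to identify the coalgebra structures---which is exactly the step the paper also flags as the delicate part and defers to the formalization. Your additional opening observation that the terminality predicate is propositional, so only the coalgebra-level path is needed, is implicit in the paper's argument and is a sound way to organize it.
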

\begin{proof}
 The proof that any two final coalgebras $(L, \out)$ and $(L', \out')$ have equivalent carriers is standard. 
 The Univalence Axiom then implies that the carriers are (pro\-po\-sitionally) equal, $L = L'$. 
 It then remains to show that the coalgebra structure $\out$, when transported along this identity, is equal to $\out'$. 
 We refer to the formalized proof for details. 
\end{proof}

That is, any inhabitant of $\Final_S$ is necessarily unique up to propositional equality. 
We refer to this inhabitant as the \enquote{final coalgebra}.
In \Cref{sec:derive-m} we construct the final coalgebra.

In the introduction, we use the adjectives \enquote{internal} and \enquote{external} to distinguish between types specified via
universal properties and type-theoretic rules, respectively. Since we do not consider rules for $\M$-types (that is, external $\M$-types) in this work,
we drop the adjective \enquote{internal} in what follows.

In the following example, we anticipate the result of the next section, namely the existence of a final coalgebra for any signature $(A,B)$:
\begin{example}\label{example:streams}
The coinductive type $\stream(A_0)$ of streams over a base type $A_0$ is given
by $\M(A,B)$ with $A = A_0$ and $B(a):= 1$ for any $a:A_0$.  The corresponding
polynomial functor $P$ satisfies $P(X) = A_0 \times X$.

Using finality of $\M(A,B)$ we can define maps into streams and prove that they
have the expected computational behaviour.
For example, the $\zip$ function
\[
\zip : \stream(A) \times \stream(B) \to \stream(A \times B)
\]
can be obtained from the universal property applied to the coalgebra
\[
\begin{array}{l}
\theta : \stream(A) \times \stream(B) \to (A \times B) \times (\stream(A) \times \stream(B)) \\
\theta (xs, ys) := ((\head(xs), \head(ys)), (\tail(xs), \tail(ys))
\end{array}
\]
where $\head : \stream(X) \to X$ and $\tail : \stream(X) \to \stream(X)$ are the
two components of the final coalgebra $\out$.
The computational behaviour of $\zip$ is expressed by the fact that $\zip$ is a
coalgebra morphism
\[
\zip(xs, ys) = \cons((\head(xs), \head(ys)), (\zip (\tail(xs), \tail(ys)))),
\]
where $\cons = \out^{-1}$.
\end{example}

\section{Derivability of $\M$-types}\label{sec:derive-m}

In \Cref{sec:m-in-hott} we defined the type $\Final_S$ of final coalgebras of a signature $S$, and showed that this 
type is a proposition (\Cref{lem:is-prop-final}).
In this section, we construct an element of $\Final_S$, which, combined with \Cref{lem:is-prop-final}, proves the following theorem per the remark at the end of \Cref{sec:hott}:

\begin{theorem}\label{thm:is-contr-final}
 The type 
   \[\Final_S = \sm{(X,\rho)}\prd{(C,\gamma)}\iscontr\big(\Coalgmor{(C,\gamma)}{(X,\rho)}\big)\] 
 is contractible.
\end{theorem}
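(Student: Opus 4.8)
The plan is to invoke the characterization of contractibility recalled at the end of \Cref{sec:hott}: a type is contractible precisely when it is a proposition and is inhabited. Since \Cref{lem:is-prop-final} already shows that $\Final_S$ is a proposition, it suffices to exhibit a single element of $\Final_S$, that is, to construct a final coalgebra for the signature $S = (A,B)$.

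To build the carrier $M$ I would use the classical \enquote{terminal sequence} (Adámek) construction, internalised as a sequential inverse limit. Define a sequence of types by iterating $P$ on the singleton: set $X_0 := 1$ and $X_{n+1} := P(X_n)$, with transition maps $\pi_0 := {!} : X_1 \to 1$ and $\pi_{n+1} := P(\pi_n)$; here the recursion on $n$ is exactly where the inductive type $\N$ is used. The carrier is then the limit of the resulting tower,
\[ M := \sm{x : \prod_{n : \N} X_n}\prd{n:\N}\bigl(\pi_n(x_{n+1}) = x_n\bigr). \]
The coalgebra structure $\out : M \to PM$ arises from a fixed-point equivalence $M \simeq PM$, obtained by splicing together a \enquote{shift} equivalence $M \simeq \lim_n X_{n+1}$ with the preservation of the limit by $P$, so that $\lim_n X_{n+1} = \lim_n P(X_n) \simeq P(\lim_n X_n) = PM$.

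Two technical ingredients underlie this equivalence. First, $P$ must preserve sequential inverse limits: since $P(X) = \sm{a:A}(B(a) \to X)$, this reduces to showing that the exponential $B(a) \to (\blank)$ commutes with the limit (a general fact, as $\prod$-types preserve limits in their codomain) and that the dependent sum $\sm{a:A}$ commutes with it. The latter is helped by the observation that the transition maps act as the identity on the $A$-component, so that the $A$-part of the limit is the limit of a constant tower with identity maps, which is equivalent to $A$. Second, shifting the tower by one leaves its limit unchanged, because the extra datum $x_0$ together with the coherence $\pi_0(x_1) = x_0$ forms a contractible based-path space. For finality I would then, given a coalgebra $(C,\gamma)$, build the canonical cone $u_n : C \to X_n$ by $u_0 := {!}$ and $u_{n+1} := P(u_n)\circ\gamma$, assemble it into a map $u : C \to M$ by the universal property of the limit, and show that $\Coalgmor{(C,\gamma)}{(M,\out)}$ is equivalent to the (contractible) type of cones from $C$ to the tower, compatibly with the coalgebra structures.

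The main obstacle is entirely homotopical: in $1$-category theory each of these steps is routine, but here every equivalence must be produced together with all of its higher coherence data. The genuinely hard part is the preservation of the limit by $P$---concretely, commuting the dependent sum $\sm{a:A}$ past the sequential limit in the full, untruncated theory. This is precisely the step for which the earlier, $0$-truncated treatment of \cite{indexed_containers} could fall back on set-level reasoning; removing the truncation hypothesis forces one to track the coherences relating the $A$-component paths to the dependent $B(a)$-component paths inside the limit, and to verify that the resulting fixed-point equivalence $\out$ is compatible with the cone description driving the finality argument.
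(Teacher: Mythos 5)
Your proposal is correct and follows essentially the same route as the paper: reduce to inhabitedness via \Cref{lem:is-prop-final}, build the carrier as the limit of the tower $1 \leftarrow P1 \leftarrow P^2 1 \leftarrow \cdots$, obtain $\out$ by splicing the shift equivalence (\Cref{lem:limit_of_shifted_chain}) with preservation of the limit by $P$ (\Cref{lem:poly_continuous}), and prove finality by identifying coalgebra morphisms with coalgebra-compatible cones whose components are forced level by level. Your observation that the $A$-component of the limit is a cochain with identity transition maps, hence determined by its first element, is exactly the paper's \Cref{lem:first_elem_deter_cochain_tuple}, which the paper also reuses (twice) to finish the contractibility of the cone type in the finality argument.
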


The construction of the final coalgebra is done in several steps, inspired by a construction of $\M$-types from $\W$-types 
in a type theory satisfying Axiom K by Altenkirch et al.~\cite{indexed_containers}.
Its carrier is defined as the limit of a \emph{chain}:

\begin{definition}
 A \fat{chain} is a pair $(X,\pi)$ of a family of types $X:\N \to \type$ and a family of functions $\pi_n : X_{n+1} \to X_n$.
 Here and below we write $X_n:=X(n)$ for the $n$th component of the family $X$.
\end{definition}

The (homotopy) limit of such a chain is given by the type of \enquote{compatible tuples}:
\begin{definition}\label{defn:limit}
 The \fat{limit} of the chain $(X,\pi)$ is given by the type
 \[ L := \sm{x:\prd{n:\N}X_n}\prd{n:\N}\pi_{n}x_{n+1} = x_{n} \enspace . \] 
The limit is equipped with projections $p_n : L \to X_n$, and $\beta_n : \comp{p_{n+1}}{\pi_n} = p_n$.  Sometimes, we simply write
\[ L = \lim X,\]
when the maps $\pi$ are clear.
\end{definition}

Note that this limit (we drop the adjective \enquote{homotopy}) is an instance of the general construction of homotopy limits 
by Avigad, Kapulkin, and Lumsdaine \cite{homotopy_limits}.

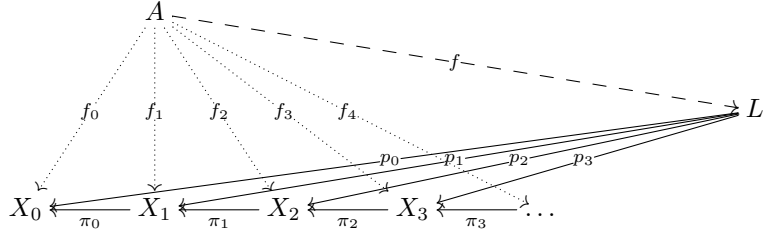
\begin{figure}[hbt]
\centering
  \[
   \begin{xy}
    \xymatrix@C=30pt{
             & A \ar@{.>}[ddl]|{f_0}\ar@{.>}[dd]|{f_1} \ar@{.>}[ddr]|{f_2} \ar@{.>}[ddrr]|{f_3} \ar@{.>}[ddrrr]|{f_4} \ar@{-->}[drrrrr]|{f}& \\
       & & & &  & &  L \ar[lllllld]|{p_0} \ar[llllld]|{p_1} \ar[lllld]|{p_2} \ar[dlll]|{p_3}&   \\
         X_0 & X_1 \ar[l]^{\pi_0} & X_2 \ar[l]^{\pi_1}& X_3 \ar[l]^{\pi_2} & \ldots \ar[l]^{\pi_3}  
     }
   \end{xy}
  \]
\caption{Universal property of $L$} \label{fig:universal}
\end{figure}

\begin{lemma}\label{lem:univ_prop_limit}
 The type $L$ satisfies the following universal property: for all types $A$, we have an equivalence of types between maps into $L$ and 
 \enquote{cones} over $X$:
 \[
      A \to L \enspace \simeq  \enspace \sm{f : \prd{n:\N} A \to X_n} \prd{n:\N} \comp{f_{n+1}}{\pi_n} = f_n  \enspace =: \Cone(A) \enspace . 
  \]
\end{lemma}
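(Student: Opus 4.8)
The plan is to prove the equivalence as a composite of standard type-theoretic equivalences, reassociating the iterated $\prod$/$\sum$ structure of $A \to L$ until it literally becomes $\Cone(A)$. No property of the particular chain $(X,\pi)$ is used: the whole argument is formal manipulation of dependent products and sums by means of (i) the distributivity of $\to$ over $\sum$ (the \enquote{type-theoretic axiom of choice}, \cite[Thm.~2.15.7]{hottbook}), (ii) currying together with the symmetry of iterated $\prod$, and (iii) function extensionality, which is available to us since we assume univalence \cite[\S4.9]{hottbook}.

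Concretely, I would proceed as follows. Unfolding \Cref{defn:limit}, a map $A \to L$ is a function of type $\prd{a:A}\sm{x:\prd{n:\N}X_n}\prd{n:\N}(\pi_n(x_{n+1}) = x_n)$. The distributivity equivalence $(\prd{a:A}\sm{x:P}Q(x)) \simeq (\sm{g:A\to P}\prd{a:A}Q(g(a)))$, applied with $P \jdeq \prd{n:\N}X_n$, splits this into a map $g : A \to \prd{n:\N}X_n$ together with a term of $\prd{a:A}\prd{n:\N}(\pi_n((g(a))_{n+1}) = (g(a))_n)$. Currying $g$ along the swap equivalence $(A \to \prd{n:\N}X_n) \simeq \prd{n:\N}(A \to X_n)$ produces the first cone component $f$, characterised by $f_n(a) = (g(a))_n$; carrying the second component through this equivalence rewrites it as $\prd{a:A}\prd{n:\N}(\pi_n(f_{n+1}(a)) = f_n(a))$, and swapping the two products gives $\prd{n:\N}\prd{a:A}(\pi_n(f_{n+1}(a)) = f_n(a))$. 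Finally, applying function extensionality underneath the $n$-indexed product converts, for each $n$, the family of pointwise paths $\prd{a:A}(\pi_n(f_{n+1}(a)) = f_n(a))$ into a single path $\comp{f_{n+1}}{\pi_n} = f_n$ between the functions $A \to X_n$, using that $(\comp{f_{n+1}}{\pi_n})(a) \jdeq \pi_n(f_{n+1}(a))$. Composing these four equivalences yields $A \to L \simeq \sm{f:\prd{n:\N}(A\to X_n)}\prd{n:\N}(\comp{f_{n+1}}{\pi_n} = f_n)$, which is precisely $\Cone(A)$.

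No step is individually difficult, and since each is a known equivalence and equivalences compose, the composite is an equivalence. The point that deserves care is the funext step interacting with the preceding currying: I must check that folding the pointwise paths into function-paths is a full equivalence (not merely a logical one) and that the surviving first projection is the intended cone map $f$ rather than a reindexed variant, so that the round-trips compute correctly. For concreteness, the resulting inverse sends a cone $(f,q)$, with $q : \prd{n:\N}(\comp{f_{n+1}}{\pi_n} = f_n)$, to the map taking $a:A$ to the pair whose first component is $n \mapsto f_n(a)$ and whose second component is $n \mapsto \happly(q_n)(a)$; verifying that this is a two-sided inverse is routine bookkeeping.
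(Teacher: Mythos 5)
Your proof is correct and is essentially the argument the paper intends: the paper omits the proof body entirely (only describing the forward map $f \mapsto (\comp{f}{p_n})_n$), and your composite of the type-theoretic axiom of choice, $\Pi$-swapping, and function extensionality is exactly the kind of ``equational reasoning'' by elementary equivalences that the authors say their formalisation uses, with your explicit inverse agreeing with the stated forward map.
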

The equivalence, from left to right, maps a function $f : A \to L$ to
its projections $\comp{f}{p_n}$, as shown in \Cref{fig:universal}.

The next lemma is about tuples in \fat{co}chains, that is, tuples in chains with inverted arrows. Those tuples are determined by their first element:
\begin{lemma}\label{lem:first_elem_deter_cochain_tuple}
  Let $X : \N \to \type$ be a family of types, and $l : \prd{n:\N} X_n \to X_{n+1}$ a family of functions.
  Let 
  \[ Z := \sm{x:\prd{n:\N} X_n} \prd{n:\N} x_{n+1} = l_n(x_n) \enspace . \]
  Then the projection $Z \to X_0$ is an equivalence.
\end{lemma}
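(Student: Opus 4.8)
The plan is to produce an explicit quasi-inverse to the projection $e : Z \to X_0$, $e(x,p) \defeq x_0$. Intuitively the successor constraints $x_{n+1} = l_n(x_n)$ force the whole sequence once its first entry is fixed, so I define $\phi : X_0 \to Z$ by iterating $l$. Given $a : X_0$, I set $\xi : \prd{n:\N} X_n$ by $\N$-recursion with $\xi_0 \defeq a$ and $\xi_{n+1} \defeq l_n(\xi_n)$; since $\xi_{n+1}$ is \emph{definitionally} equal to $l_n(\xi_n)$, the family $\rho_n \defeq \refl{l_n(\xi_n)}$ witnesses the constraints, and I put $\phi(a) \defeq (\xi, \rho)$. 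One composite is then immediate: $e(\phi(a)) \jdeq a$, so $e \circ \phi = \idfunc[X_0]$ holds by reflexivity.

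The substance lies in the other composite, $\phi \circ e = \idfunc[Z]$: for every $(x,p) : Z$ I must exhibit a path $\phi(x_0) = (x,p)$ in $Z$. By the characterization of identity in a $\sum$-type this decomposes into (i) a path $q : \xi = x$ in $\prd{n:\N} X_n$, where $\xi$ is now the sequence iterated from $x_0$, and (ii) a proof that $q$ carries the reflexivity witnesses $\rho$ to the given witnesses $p$. For (i) I invoke function extensionality and prove $\xi_n = x_n$ by induction on $n$: the base case is $\refl{x_0}$, and the step combines the induction hypothesis with $\apfunc{l_n}$ and the reversed constraint $\opp{p_n} : l_n(x_n) = x_{n+1}$.

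I expect step (ii) to be the main obstacle: after fixing $q$ one must verify a coherence condition in the identity types $x_{n+1} = l_n(x_n)$, namely that transporting each $\rho_n$ along $q$ yields $p_n$. This is precisely the higher-categorical bookkeeping that the set-level argument of \cite{indexed_containers} can ignore, and it is cleanest to handle componentwise: reduce the transport of the tuple to a family of dependent paths indexed by $n$, and discharge each by the same induction that produced $q$, so that the definitional identity $\rho_n = \refl{}$ matches the transported $p_n$ after unfolding. An alternative framing is to show directly that each fibre $\hfib{e}{a}$ is contractible; the guiding intuition is that the singleton types $\sm{x_{n+1}:X_{n+1}}(x_{n+1} = l_n(x_n))$ are contractible by \cite[Lemma 3.11.8]{hottbook}, but because the product over $\N$ is infinite, making this precise again routes through the $\N$-recursion above. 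Either way, once both composites are established the projection $e$ is an equivalence.
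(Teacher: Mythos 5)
Your proof is correct, but it takes a genuinely different route from the one in the paper. The paper's proof is a one-line application of the homotopy-initiality of $\N$ in the sense of Awodey--Gambino--Sojakova: fixing $z : X_0$, the pair $(z, l)$ equips the fibration $X$ over $\N$ with an algebra structure for the functor $G Y = 1 + Y$, and homotopy-initiality says precisely that the type $S z$ of algebra \emph{sections} is contractible; since $Z \simeq \sm{z : X_0} S z$, the projection to $X_0$ is an equivalence. All of the higher coherence you worry about in step (ii) is absorbed into that single appeal to initiality. Your argument instead builds the quasi-inverse by hand via $\N$-recursion and then discharges the coherence explicitly, which does work: writing $q_0 \defeq \refl{x_0}$ and $q_{n+1} \defeq \apfunc{l_n}(q_n) \ct \opp{p_n}$, the transport of $\refl{l_n(\xi_n)}$ along the funext path computes to $\opp{q_{n+1}} \ct \apfunc{l_n}(q_n)$, which equals $p_n$ by the groupoid laws once $q_{n+1}$ is unfolded --- so the matching you anticipate in (ii) is not an obstacle so much as a routine (if fiddly) calculation, provided you also commute transport past the $\Pi$-type over $\N$ and use $\happly \circ \funext \sim \idfunc[]$. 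What the paper's approach buys is exactly the avoidance of this bookkeeping, at the cost of invoking the (nontrivial, but established) equivalence between the recursion-with-uniqueness formulation and the contractibility of section types for $\N$; what your approach buys is self-containedness and an explicit description of the inverse map with good definitional behaviour on one composite. Your closing remark is also well taken: the fibrewise-contractibility framing via singleton types does not directly apply to an infinite product, and any honest version of it routes back through either your explicit recursion or the paper's initiality argument.
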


\begin{proof}
 Let $G$ be the functor defined by $G Y = 1 + Y$. Fix an element $z : Z$. 
 Then $z$ and $l$ together define a $G$-algebra structure on $X$, regarded as
a fibration over $\N$.  
Since $\N$ is the homotopy initial algebra of $G$, the
type $S z$ of algebra sections of $X$ is contractible.  But $Z$ is 
equivalent to

\[ \sm{z:X_0} \sm{x : \prd{n:\N}X_n} (x_0 = z) \times \left(\prd{n:\N} x_{n+1} = l_n(x_n)\right) \enspace , \]
which is exactly $\sm{z : X_0} S z \simeq X_0$.
\end{proof}

\begin{lemma}\label{lem:limit_of_shifted_chain}
Let $(X, \pi)$ be a chain, and let $(X', \pi')$ be the \emph{shifted} chain,
defined by $X'_n := X_{n+1}$ and $\pi'_n := \pi_{n+1}$.  Then the two chains have
equivalent limits.
\end{lemma}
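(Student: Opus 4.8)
The plan is to show that the limit $L$ of the chain $(X,\pi)$ and the limit $L'$ of the shifted chain $(X',\pi')$ are equivalent by exhibiting an explicit back-and-forth correspondence on compatible tuples, and then verify it is an equivalence. The key observation is that a compatible tuple in the shifted chain is \emph{the same data} as a compatible tuple in the original chain with the first component dropped, since $X'_n = X_{n+1}$ and $\pi'_n = \pi_{n+1}$ mean that a tuple $(y_n)_{n:\N}$ over $X'$ is literally a tuple $(y_n)_{n:\N}$ with $y_n : X_{n+1}$ satisfying $\pi_{n+1}(y_{n+1}) = y_n$.

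First I would unfold both limits explicitly. By \Cref{defn:limit},
\[
L = \sm{x:\prd{n:\N}X_n}\prd{n:\N}\pi_n(x_{n+1}) = x_n,
\qquad
L' = \sm{y:\prd{n:\N}X'_n}\prd{n:\N}\pi'_n(y_{n+1}) = y_n.
\]
Substituting $X'_n = X_{n+1}$ and $\pi'_n = \pi_{n+1}$ shows that $L'$ consists of families $y$ with $y_n : X_{n+1}$ together with proofs $\pi_{n+1}(y_{n+1}) = y_n$. The plan is to reindex: given $y \in L'$, define $x_0 := \pi_0(y_0)$ and $x_{n+1} := y_n$, and given $x \in L$, define $y_n := x_{n+1}$. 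The compatibility conditions then translate into each other after a shift of index, and the component $x_0$ is recovered from the compatibility witness $\pi_0(x_1) = x_0$ together with $x_1 = y_0$.

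The cleanest route, however, is to invoke \Cref{lem:univ_prop_limit} and reduce to an equivalence of cone types. By that lemma, $A \to L \simeq \Cone(A)$ and $A \to L' \simeq \Cone'(A)$, where $\Cone'(A)$ is the cone type for the shifted chain. I would show directly that $\Cone(A) \simeq \Cone'(A)$ \emph{naturally in $A$}: a cone over $X$ is a family $f_n : A \to X_n$ with $\comp{f_{n+1}}{\pi_n} = f_n$, while a cone over $X'$ drops $f_0$ and the triangle $\comp{f_1}{\pi_0}=f_0$, but the latter \emph{determines} $f_0$ uniquely from $f_1$. Hence forgetting $f_0$ and its witness is an equivalence, because $f_0$ together with its defining path form a based path space $\sm{f_0}(f_0 = \comp{f_1}{\pi_0})$, which is contractible. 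This is exactly the kind of singleton-contraction argument used in \Cref{lem:first_elem_deter_cochain_tuple}. By the Yoneda-style principle that a natural family of equivalences $A \to L \simeq A \to L'$ induces an equivalence $L \simeq L'$ (taking $A := L$ and $A := L'$ and tracking the identity maps), we conclude $L \simeq L'$.

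\textbf{The main obstacle} I anticipate is the bookkeeping of the dependent paths in the compatibility witnesses under reindexing: the condition $\prd{n:\N}\pi_n(x_{n+1}) = x_n$ must be split off its $n=0$ summand and matched against the shifted condition, and doing this as an honest chain of equivalences of $\Sigma$-types requires careful application of the ``contractible based-path'' lemma together with associativity/commutativity of $\Sigma$ and the universal property of $\N$-indexed products (essentially peeling off the zeroth factor of a product over $\N$, i.e.\ $\prd{n:\N} Y_n \simeq Y_0 \times \prd{n:\N} Y_{n+1}$). I would phrase the whole argument as a composite of such standard equivalences so that no transport computation needs to be carried out by hand, and defer the detailed coherence verification to the \Agda formalisation.
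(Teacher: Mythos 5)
Your proposal is correct, and in fact contains the paper's own proof as its second paragraph: the paper establishes $L' \simeq L$ by a short chain of equivalences whose only nontrivial step is exactly your singleton contraction, namely multiplying $L'$ by the contractible type $\sm{x_0:X_0}(\pi_0 y_0 = x_0)$ and then merging $X_0 \times \prd{n:\N}X_{n+1} \simeq \prd{n:\N}X_n$ and the corresponding split of the compatibility conditions. Your preferred ``cleanest route'' via \Cref{lem:univ_prop_limit} is a genuinely different packaging of the same contraction: you perform it on $\Cone(A)$ rather than on the limit itself, and then transfer the result back to $L \simeq L'$ by a Yoneda-style argument. That is valid, but note it does not come for free from \Cref{lem:univ_prop_limit} as stated, which only gives an equivalence $(A\to L)\simeq \Cone(A)$ for each fixed $A$; to run the Yoneda step you must additionally verify that the composite equivalence $(A\to L)\simeq(A\to L')$ is natural in $A$ (compatible with precomposition), including on the path components of the cones. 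The paper's direct route avoids that naturality obligation at the cost of the index bookkeeping you correctly identify as the main nuisance; your route trades that bookkeeping for the naturality check. Either way the mathematical content is the same, and your anticipated use of the peeling equivalence $\prd{n:\N}Y_n \simeq Y_0\times\prd{n:\N}Y_{n+1}$ together with contractibility of the based path space is precisely what the formalisation does.
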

\begin{proof}
 Let $L$ and $L'$ be the limits of $(X,\pi)$ and $(X',\pi')$, respectively. 
 We have
 \begin{align*}
 L' &\overset{\definemarker{1}}{\simeq} \sm{y:\prd{n:\N} X_{n+1}} \prd{n:\N} \pi_{n+1}y_{n+1} = y_n \\
    &\overset{\definemarker{2}}{\simeq} \sm{x_0 : X_0} \sm{y : \prd{n:\N} X_{n+1}} (\pi_0 y_0 = x_0) \times \left(\prd{n:\N}\pi_{n+1}y_{n+1} = y_n\right) \\
    &\overset{\definemarker{3}}{\simeq} \sm{x:\prd{n:\N} X_n} (\pi_0 x_0 = x_0) \times \left(\prd{n:\N}\pi_{n+1}x_{n+2} = x_{n+1} \right)\\
    &\overset{\definemarker{4}}{\simeq} \sm{x:\prd{n:\N} X_n}  \prd{n:\N}\pi_{n}x_{n+1} = x_{n} \\
    &\overset{\definemarker{5}}{\simeq} L \enspace ,
 \end{align*}
where \refermarker{1} and \refermarker{5} are by definition. Equivalence \refermarker{2} is given by multiplying with the contractible 
type $\sm{x_0:X_0} \pi_0 y_0 = x_0$ \cite[Lem.\ 3.11.8]{hottbook} and subsequent swapping of components in a direct product.
Equivalence \refermarker{3} is given by joining the first two components, and similarly in \refermarker{4} the last two components are joined.
\end{proof}

The next lemma says that  polynomial functors (see \Cref{def:poly_functor}) commute with limits of chains.
Let $(A,B)$ be a container with associated polynomial functor $P = P_{A,B}$.
Let $(X,\pi)$ be a chain with limit $(L,p)$. 
Define the chain $(PX,P\pi)$ with $PX_n := P(X_n)$ and likewise for $P\pi$, and let $L^P$ be its limit.
The family of maps $Pp_n : PL \to PX_n$ determines a function $\alpha : PL \to L^P$.
\begin{lemma}\label{lem:poly_continuous}
The function $\alpha$ is an isomorphism.
\end{lemma}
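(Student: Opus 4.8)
The plan is to produce a chain of equivalences $L^P \simeq PL$ and then observe that the resulting composite is homotopic to $\alpha$; since $\isequiv(\alpha)$ is a proposition, this suffices. The guiding intuition is that $P$ commutes with the limit for two independent reasons: the transition maps $P\pi_n$ leave the $A$-label untouched, so the $A$-part of the chain is \emph{constant}, while the $B(a)$-exponential part commutes with limits by the universal property of $L$ (\Cref{lem:univ_prop_limit}).

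First I would unfold $L^P$ as the limit of the chain $(PX,P\pi)$,
\[ L^P = \sm{u : \prd{n:\N}PX_n}\prd{n:\N}P\pi_n(u_{n+1}) = u_n, \]
and write each component as a pair $u_n =: (a_n, g_n)$ with $a_n : A$ and $g_n : B(a_n)\to X_n$. Using the type-theoretic axiom of choice (\ac) to distribute $\prod$ over $\sum$, together with the standard characterisation of identities in a $\sm$-type, the condition $P\pi_n(u_{n+1})=u_n$ splits into a path $q_n : a_{n+1}=a_n$ and a path over it relating $\comp{g_{n+1}}{\pi_n}$ to $g_n$ by transport along $q_n$. Collecting the data, $L^P$ takes the form
\[ \sm{\bar a : \prd{n:\N}A}\ \sm{\bar q : \prd{n:\N}(\bar a_{n+1}=\bar a_n)}\ \sm{g : \prd{n:\N}(B(\bar a_n)\to X_n)}\ \prd{n:\N}\ \mathsf{transport}^{\lambda a.\,B(a)\to X_n}(\bar q_n,\ \comp{g_{n+1}}{\pi_n}) = g_n. \]

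The crux, and the step I expect to be the main obstacle, is to eliminate the $(\bar a,\bar q)$-block. The type $\sm{\bar a:\prd{n:\N}A}\prd{n:\N}(\bar a_{n+1}=\bar a_n)$ is the limit of the \emph{constant} chain on $A$ (all transition maps the identity), hence equivalent to $A$ by \Cref{lem:first_elem_deter_cochain_tuple} applied with $l_n = \idfunc$, the projection to $\bar a_0$ being the equivalence. The difficulty is that the remaining data $g$ and the transports genuinely depend on $\bar a$, so I cannot contract $(\bar a,\bar q)$ in isolation; I would reorganise the $\sm$-telescope so that $(\bar a,\bar q)$ appears as a single contractible (singleton-like) block and then eliminate it by path induction. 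Once $\bar a$ is taken constant at some $a:A$ and each $\bar q_n$ reflexive, every transport becomes transport along reflexivity and the condition collapses, yielding
\[ L^P \simeq \sm{a:A}\ \sm{g : \prd{n:\N}(B(a)\to X_n)}\ \prd{n:\N}\ \comp{g_{n+1}}{\pi_n} = g_n. \]

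For fixed $a:A$, the inner type is precisely the type of cones over $X$ with apex $B(a)$, which by the universal property of the limit (\Cref{lem:univ_prop_limit}) is equivalent to $B(a)\to L$. Reassembling over $a:A$ gives $\sm{a:A}(B(a)\to L) = PL$. Finally I would verify that the composite of these equivalences is homotopic to $\alpha$: tracing $(a,h):PL$ through the chain sends it to the tuple whose $n$-th component is $(a,\ \comp{h}{p_n})$ with the evident coherences, which is exactly the cone $(Pp_n)_n$ that defines $\alpha$. As $\isequiv(\alpha)$ is a proposition, this identification shows $\alpha$ is an equivalence, and hence an isomorphism.
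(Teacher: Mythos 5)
Your argument is essentially the paper's own proof: the same chain of equivalences (unfold $L^P$ as a limit, swap $\Pi$ and $\Sigma$, expand equality of pairs, contract the $(\bar a,\bar q)$-block via \Cref{lem:first_elem_deter_cochain_tuple} with $l_n=\idfunc$, apply the universal property of $L$ from \Cref{lem:univ_prop_limit} fibrewise over $a:A$, and finally check the composite is homotopic to $\alpha$, which suffices since $\isequiv(\alpha)$ is a proposition). The only difference is that you make explicit the telescope reorganisation needed to eliminate the $(\bar a,\bar q)$-block in the presence of the dependent data $g$, a step the paper compresses into a one-line appeal to \Cref{lem:first_elem_deter_cochain_tuple}; your elaboration is correct.
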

\begin{proof}
By \enquote{equational} reasoning we have
\begin{align*} L^P &\overset{\definemarker{6}}{\simeq} \sm{w : \prd{n:\N} \sm{a:A} B(a)\to X_n} \prd{n:\N} (P\pi_n) w_{n+1} = w_n  \\
                   &\overset{\definemarker{7}}{\simeq} \sm{a: \prd{n:\N}A} \sm{u:\prd{n:\N}B(a_n) \to X_n} \prd{n:\N} (a_{n+1},\comp{u_{n+1}}{\pi_n}) = (a_n,u_n) \\
                   &\overset{\definemarker{8}}{\simeq} \sm{a: \prd{n:\N}A} \sm{p:\prd{n:\N} a_{n+1}=a_n}\sm{u:\prd{n:\N} B(a_n) \to X_n} \prd{n:\N} \trans{(p_n)}{\comp{u_{n+1}}{\pi_n}} = u_n \\
                   &\overset{\definemarker{9}}{\simeq} \sm{a:A} \sm{u : \prd{n:\N} B(a) \to X_n} \prd{n:\N} \comp{u_{n+1}}{\pi_n} = u_n \\
                   &\overset{\definemarker{10}}{\simeq} \sm{a:A} B(a)\to L \\
                   &\overset{\definemarker{11}}{\simeq} PL
\end{align*}
where \refermarker{6} and \refermarker{11} are by definition, \refermarker{7} is by swapping $\Pi$ and $\Sigma$, \refermarker{8} by expanding equality of pairs,
\refermarker{9} by applying \Cref{lem:first_elem_deter_cochain_tuple} and \refermarker{10} by universal property of $L$.
Verifying that the composition of these isomorphisms is
$\alpha$ is straightforward.
\end{proof}

\begin{proof}[Proof of \Cref{thm:is-contr-final}]
We now construct a terminal coalgebra for a container $(A,B)$. Let $P = P_{A,B}$ the polynomial functor associated to the container.
By recursion on $\N$ we define the chain
\[
 \begin{xy}
  \xymatrix{
       1 & P1 \ar[l]_{!} & P^2 1 \ar[l]_{P!} & P^3 \ar[l]_{P^2 !} & \ldots \ar[l]_{P^3 !} 
   }
 \end{xy}
\]
which for brevity we call $(W,\pi)$, that is, $W_n:=P^n1$ and $\pi_n:=P^n!$.
 
Let $(L,p)$ be the limit of $(W, \pi)$.  If $L'$ is the limit of the shifted chain, we have a sequence of equivalences
\[  PL \overset{\definemarker{12}}{\simeq} L' \overset{\definemarker{13}}{\simeq} L \]
where \refermarker{12} is given by \Cref{lem:poly_continuous} and \refermarker{13} by \Cref{lem:limit_of_shifted_chain}.
 We denote this equivalence by $\in : P L \to L$, and its inverse by $\out : L \to P L$.

 It is worth noting that the construction of $L$ ``does not raise the universe
 level'', i.e., if $A$ and $B$ are contained in some universe $\mathcal{U}$,
 then $L$ is contained in $\mathcal{U}$ as well.  In other words, we only need
 one universe to carry out our construction of the final coalgebra.

 We will now show that $(L,\out)$ is a final $(A,B)$-coalgebra.  For this, let
$(C,\gamma)$ be any coalgebra, i.e., $\gamma : C \to PC$. The type of coalgebra
morphisms $\Coalgmor{(C,\gamma)}{(L,\out)}$ is given by
 \[ U:= \sm{f : C \to L} \comp{f}{\out} = \comp{\gamma}{Pf} \enspace .  \] We
need to show that $U$ is contractible.
We compute as follows (see below the math display for intermediate definitions):

  \begin{align*}
    U &\overset{\definemarker{14}}{\simeq} \sm{f : C \to L} \comp{f}{\out} = \comp{\gamma}{Pf} \\
      &\overset{\definemarker{15}}{\simeq} \sm{f : C \to L} \comp{f}{\out} = \step(f)\\
      &\overset{\definemarker{16}}{\simeq} \sm{f : C \to L} \comp{\comp{f}{\out}}{\in} = \comp{\step(f)}{\in}\\
      &\overset{\definemarker{17}}{\simeq} \sm{f : C \to L} f = \Psi(f)\\
      &\overset{\definemarker{18}}{\simeq} \sm{c : \Cone} e(c) = \Psi(e(c))\\
      &\overset{\definemarker{19}}{\simeq} \sm{c : \Cone} e(c) = e (\Phi(c))\\
      &\overset{\definemarker{20}}{\simeq} \sm{c : \Cone} c = \Phi(c)\\
      &\overset{\definemarker{21}}{\simeq} \sm{(u,q) : \Cone} \sm {p : u = \Phi_0 (u)} \trans{p}{q} = \Phi_1 (u)(q)\\
      &\overset{\definemarker{22}}{\simeq} \sm{u : \Cone_0}\sm{p : u = \Phi_0 u} \sm{q : \Cone_1 u} \trans{p}{q} = \Phi_1 (u)(q)\\
      &\overset{\definemarker{23}}{\simeq} \sm{t : 1} 1\\
      &\simeq 1
  \end{align*}
  where we use the following definitions:
  The function 
   $\step_Y : (C \to Y) \to (C \to PY)$ is defined as $\step_Y(f) := \comp{\gamma}{Pf}$ and
   $\Psi : (C \to L) \to (C \to L)$ is defined as $\Psi(f) := \comp{\step_L(f)}{\in}$.
  The map $\Phi : \Cone \to \Cone$ is the counterpart of $\Psi$ on the side of cones. We define
  $\Phi (u,g) = (\Phi_0 u, \Phi_1u(g)) : \Cone \to \Cone$ with 
  \begin{align*} (\Phi_0 u)_0 &:= x \mapsto tt : C \to 1 = W_0\\
                 (\Phi_0 u)_{n+1} &:= \step_{W_n}(u_n) : C \to W_{n+1} = PW_n
  \end{align*}
  and analogously for $\Phi_1$ on paths.
  By $e$ we denote the equivalence of \Cref{lem:univ_prop_limit} from right to left, and $\Cone = \sm{u:\Cone_0} \Cone_1(u)$ is short for $\Cone(C)$.
  The equivalence \refermarker{16} follows from $\in$ being an equivalence, and \refermarker{17} follows from $\in$ and $\out$ being inverse to each other.
  We pass from maps into $L$ to cones in \refermarker{18}, using the equivalence of \Cref{lem:univ_prop_limit}, while \refermarker{19} uses the commutativity of the following square:
\[
\xymatrix{
\Cone \ar[r]^-e \ar[d]_{\Phi} & (C \to L) \ar[d]^{\Psi}\\
\Cone \ar[r]^-e & (C \to L).
}
\]
  In \refermarker{21}, identity in a sigma type is reduced to identity of the components, and in \refermarker{22} the components are rearranged.
  Finally, step \refermarker{23} consists of two applications of \Cref{lem:first_elem_deter_cochain_tuple}.
  
  Altogether, this shows that for any coalgebra $(C,\gamma)$, the type of coalgebra morphisms $\Coalgmor{(C,\gamma)}{(L,\out)}$ is contractible.
  This concludes the construction of a final coalgebra associated to the signature $(A,B)$, and thus,
  combined with \Cref{lem:is-prop-final}, the proof of \Cref{thm:is-contr-final}. 
\end{proof}

From the construction of $L$ we get the following corollary about the homotopy level of $\M$-types:

\begin{lemma}
 The homotopy level of the (carrier of the) $\M$-type associated to the signature $(A,B)$ is bounded by that of the type of nodes $A$, that is,
 \[ \isofhlevel{n}{A} \to \isofhlevel{n}{\M(A,B)} \enspace . \]
\end{lemma}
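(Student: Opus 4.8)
The plan is to show that the carrier $L = \lim W$ of the $\M$-type is assembled entirely out of type formers that preserve $n$-types, once we know that the building blocks $W_m = P^m 1$ of the chain are themselves $n$-types. I would rely on the standard closure properties of $n$-types \cite[Chap.\ 7.1]{hottbook}: that $n$-types are closed under $\Sigma$ (when both the base and the fibres are $n$-types) and under $\Pi$ (when the codomain family consists of $n$-types, with no hypothesis on the domain), and that the identity type of an $n$-type is an $(n-1)$-type, hence by cumulativity again an $n$-type.

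First I would establish that the polynomial functor $P = P_{A,B}$ preserves the property of being an $n$-type, under the hypothesis $\isofhlevel{n}{A}$. Indeed, if $X$ is an $n$-type, then for each $a:A$ the function type $B(a)\to X$ is an $n$-type by closure of $\Pi$ (the h-level of the domain $B(a)$ being irrelevant), and therefore $P(X) = \sm{a:A}(B(a)\to X)$ is an $n$-type by closure of $\Sigma$, using $\isofhlevel{n}{A}$ for the base. I would then prove by induction on the chain index $m$ that every $W_m$ is an $n$-type: the base case $W_0 = 1$ is contractible, hence an $n$-type for every $n \ge -2$, and the inductive step $W_{m+1} = P(W_m)$ is exactly the preservation property just established.

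Finally, unfolding $L = \sm{x:\prd{m:\N}W_m}\prd{m:\N}\pi_m x_{m+1} = x_m$, the first component $\prd{m:\N}W_m$ is an $n$-type since each $W_m$ is one and $\Pi$ preserves $n$-types. For each such $x$, the second component $\prd{m:\N}(\pi_m x_{m+1} = x_m)$ is an $n$-type as well, because each factor is an identity type in the $n$-type $W_m$, hence an $(n-1)$-type and a fortiori an $n$-type, and $\Pi$ again preserves this. Closure of $\Sigma$ then yields that $L$ is an $n$-type, which is precisely the claim $\isofhlevel{n}{A}\to\isofhlevel{n}{\M(A,B)}$.

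I expect no genuine obstacle here; the only care needed is to keep the chain index $m$ cleanly separated from the fixed h-level $n$, to treat the base case of the induction (where the contractibility of $1$ supplies an $n$-type at every h-level), and to invoke the cumulativity of the h-level hierarchy so that the $(n-1)$-type of paths is counted as an $n$-type.
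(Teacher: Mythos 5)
Your proof is correct and is exactly the argument the paper intends: the paper states this lemma without a written proof, remarking only that it follows ``from the construction of $L$'', and your chain of closure properties ($P$ preserves $n$-types when $A$ is one, hence each $W_m = P^m 1$ is an $n$-type by induction, hence the limit $\sm{x:\prd{m:\N}W_m}\prd{m:\N}\pi_m x_{m+1}=x_m$ is an $n$-type by closure under $\Pi$, $\Sigma$, and identity types) is precisely the standard filling-in of that remark. Your closing caveats (contractibility of $1$ at every level, cumulativity so that path types count as $n$-types even at $n=-2$) are the right ones, and no gap remains.
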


\begin{example} We continue the example of streams of \Cref{example:streams}, with $A = A_0$ the type of nodes.
 In that case, the chain considered in the proof of \Cref{thm:is-contr-final} is given by $W_n=P^n(1) = A^n$, and the map $\pi_n : A^{n+1} \to A^n$
 chops of the $(n+1)$th element of any $(n+1)$-tuple. The limit $L$ is hence given by $A^\N = (\N \to A)$.
 The type of streams over $A$ has the same homotopy level as the type $A$ of nodes.
\end{example}

We conclude this section with a proof of the \emph{principle of coinduction}:

\begin{definition}[Bisimulation]

  Let $(C , \gamma)$ be a coalgebra for some signature $S$ with associated polynomial functor $P$ and let $\mathcal{R} : C \to C \to \mathcal{U}$ be a binary relation.
  Define 
    \[\overline{\mathcal{R}} := \sm{a : C} \sm{b : C}\mathcal{R}(a)(b)\] 
  along with two projections $\pi^{\overline{\mathcal{R}}}_1 (a,b,p) := a$ 
  and $\pi^{\overline{\mathcal{R}}}_2 (a,b,p):= b$. 

  An \emph{$S$-bisimulation} is a relation $\mathcal{R}$ together with a map
  $\alpha_{\mathcal{R}} : \overline{\mathcal{R}} \to P(\overline{\mathcal{R}})$
  such that both $\pi^{\overline{\mathcal{R}}}_1$ and
  $\pi^{\overline{\mathcal{R}}}_2$ are $P$-coalgebra morphisms:

\begin{center}
  \begin{tikzpicture}
    \node (C1) at (-3,0) {$C$};
    \node (R) at (0,0)  {$\overline{R}$};
    \node (C2) at (3,0)  {$C$};

    \node (FC1) at (-3,-2) {$P(C)$};
    \node (FR) at (0,-2) {$P(\overline{R})$};
    \node (FC2) at (3,-2) {$P(C)$};

    \draw[->] (R) -- node[above] {$\pi^{\overline{\mathcal{R}}}_1$} (C1);
    \draw[->] (R) -- node[above] {$\pi^{\overline{\mathcal{R}}}_2$} (C2);

    \draw[->] (R) -- node[fill=white] {$ \alpha_{\mathcal{R}}$} (FR);
    \draw[->] (C1) -- node[fill=white] {$\gamma$} (FC1);
    \draw[->] (C2) -- node[fill=white] {$\gamma$} (FC2);

    \draw[->] (FR) --node[below] {$P(\pi^{\overline{\mathcal{R}}}_1)$} (FC1);
    \draw[->] (FR) --node[below] {$P(\pi^{\overline{\mathcal{R}}}_2)$} (FC2);
  \end{tikzpicture}
\end{center}

\noindent
We say that a bisimulation is an \emph{equivalence bisimulation} when the
underlying relation is an equivalence relation.

\end{definition}

\begin{lemma}
  The identity relation $\cdot = \cdot$ over an $S$-coalgebra $C$ is an
  equivalence bisimulation. We write $\Delta_C$ for $\overline{\cdot = \cdot}$.
\end{lemma}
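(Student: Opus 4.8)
The plan is to provide the two ingredients required by the definition: the equivalence-relation structure on the relation $\cdot = \cdot$, and a bisimulation structure, i.e.\ a map $\alpha_{\mathcal{R}} : \Delta_C \to P(\Delta_C)$ turning both projections into $P$-coalgebra morphisms. Here $\Delta_C = \sm{a:C}\sm{b:C}(a = b)$, and I abbreviate its two projections by $\pi_1$ and $\pi_2$. The equivalence-relation part is immediate, since the identity type is reflexive via $\refl{c}$, symmetric via inversion $\opp{(\blank)}$ and transitive via concatenation $\ct$; all the content lies in the bisimulation structure.

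The driving observation is that $\Delta_C$ is equivalent to $C$. Reassociating $\Delta_C \simeq \sm{a:C}\big(\sm{b:C}(a = b)\big)$ and using that the based path space $\sm{b:C}(a = b)$ is contractible with centre $(a,\refl{a})$ \cite[Lem.\ 3.11.8]{hottbook}, the first projection $\pi_1 : \Delta_C \to C$ is an equivalence whose inverse is the diagonal $r := \lam{c:C}(c,c,\refl{c})$. I will use the two judgmental identities $\pi_1 \circ r \jdeq \idfunc[C]$ and $\pi_2 \circ r \jdeq \idfunc[C]$, both holding because $\pi_1(c,c,\refl{c}) \jdeq c \jdeq \pi_2(c,c,\refl{c})$. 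I then transport $\gamma$ across this equivalence, defining
\[ \alpha_{\mathcal{R}} \;:=\; Pr \circ \gamma \circ \pi_1 \;:\; \Delta_C \to P(\Delta_C) \enspace , \]
where $\circ$ denotes composition in the usual order (rightmost applied first).

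It remains to verify the two coalgebra-morphism squares, i.e.\ the paths $\gamma\circ\pi_1 = P\pi_1\circ\alpha_{\mathcal{R}}$ and $\gamma\circ\pi_2 = P\pi_2\circ\alpha_{\mathcal{R}}$. Here I exploit that the operation $P$ of \Cref{def:poly_functor} preserves identities and composition judgmentally, so that $P\pi_i\circ Pr \jdeq P(\pi_i\circ r) \jdeq P(\idfunc[C]) \jdeq \idfunc[PC]$. For $\pi_1$ this yields $P\pi_1\circ\alpha_{\mathcal{R}} \jdeq \gamma\circ\pi_1$, so the first square commutes essentially on the nose. For $\pi_2$ the same computation only gives $P\pi_2\circ\alpha_{\mathcal{R}} \jdeq \gamma\circ\pi_1$, which is \emph{not} definitionally $\gamma\circ\pi_2$. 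To close this gap I use that the two projections are propositionally equal: the assignment $\lam{(a,b,p):\Delta_C}p$ is a homotopy $\pi_1 \htpy \pi_2$, hence by function extensionality $\pi_1 = \pi_2$, and post-composing with $\gamma$ gives $\gamma\circ\pi_1 = \gamma\circ\pi_2$; concatenating with the judgmental identity produces the desired path for $\pi_2$.

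The routine functoriality and the $\pi_1$-square are effectively definitional, so the main obstacle is exactly the $\pi_2$-square: it is the only place where the reasoning is genuinely homotopical rather than judgmental, since $\pi_1$ and $\pi_2$ coincide only up to the propositional equality carried by the diagonal. I would therefore present the $\pi_2$ morphism path as an explicit concatenation of that judgmental equality with the $\funext$-derived identification of the projections, taking care that the coherence is tracked rather than relying on any on-the-nose agreement of $\pi_1$ and $\pi_2$.
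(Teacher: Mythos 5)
The paper states this lemma without any written proof---the argument is deferred to the \Agda formalization---so there is no textual proof to compare yours against; judged on its own terms, your argument is correct and complete. Equipping $\Delta_C$ with the structure map $\alpha_{\mathcal{R}} := Pr \circ \gamma \circ \pi_1$, where $r(c) := (c,c,\refl{c})$ is the inverse of the equivalence $\pi_1 : \Delta_C \to C$ given by contractibility of based path spaces, and then closing the $\pi_2$-square via the $\funext$ identification $\pi_1 = \pi_2$ induced by the homotopy $(a,b,p)\mapsto p$, is exactly the standard construction, and you correctly isolate that second square as the only genuinely homotopical step. One minor caveat: your claim that $P$ preserves identities and composition \emph{judgmentally} silently uses $\eta$-rules, e.g.\ $P(\idfunc)(w) \jdeq (w_0,\, \idfunc \circ w_1)$ agrees with $w$ on the nose only given $\Sigma$-$\eta$ (true in \Agda, where the formalization lives, but only propositional in some presentations of the theory). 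This does not affect correctness, since the coalgebra-morphism condition only demands a propositional equality, but the word ``judgmental'' is doing slightly more work there than it is entitled to. The equivalence-relation part is, as you say, immediate from reflexivity, inversion and concatenation of paths.
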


\begin{theorem}[Coinduction proof principle]
  Let $(L, \out)$ be the final coalgebra for $S$. For any bisimulation
$\overline{\mathcal{R}}$ over $L$, we have $\overline{\mathcal{R}} \subseteq
\Delta_L$. That is, for any $m,m' : L$,
 \[
  \mathcal{R}(m)(m') \to m = m'.
 \]
\end{theorem}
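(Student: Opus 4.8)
The plan is to exploit finality of $(L,\out)$ in its sharpest form: \Cref{thm:is-contr-final} says that for every coalgebra the type of coalgebra morphisms into $(L,\out)$ is \emph{contractible}, hence in particular a proposition. The whole point of the definition of bisimulation is that it equips $\overline{\mathcal{R}}$ with a coalgebra structure $\alpha_{\mathcal{R}}$ for which \emph{both} projections $\pi^{\overline{\mathcal{R}}}_1$ and $\pi^{\overline{\mathcal{R}}}_2$ are coalgebra morphisms $(\overline{\mathcal{R}},\alpha_{\mathcal{R}}) \to (L,\out)$. Since there can be at most one such morphism, the two projections must coincide, and applying this to a witness of $\mathcal{R}(m)(m')$ yields $m = m'$.

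Concretely, first I would unfold the hypothesis that $\overline{\mathcal{R}}$ is a bisimulation over $L$: this gives a map $\alpha_{\mathcal{R}} : \overline{\mathcal{R}} \to P(\overline{\mathcal{R}})$ together with paths witnessing that $\comp{\pi^{\overline{\mathcal{R}}}_1}{\out} = \comp{\alpha_{\mathcal{R}}}{P\pi^{\overline{\mathcal{R}}}_1}$ and likewise for $\pi^{\overline{\mathcal{R}}}_2$. In other words, the squares in the diagram of the definition commute, so that the pairs $(\pi^{\overline{\mathcal{R}}}_1, -)$ and $(\pi^{\overline{\mathcal{R}}}_2, -)$ are both elements of
\[ \Coalgmor{(\overline{\mathcal{R}},\alpha_{\mathcal{R}})}{(L,\out)} \enspace . \]
By \Cref{thm:is-contr-final}, applied to the coalgebra $(\overline{\mathcal{R}},\alpha_{\mathcal{R}})$, this type is contractible, hence a proposition, so these two inhabitants are (propositionally) equal.

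Next I would project this equality of coalgebra morphisms onto its first (carrier) component. A coalgebra morphism is a pair of a carrier map and a path, so equality of the pairs gives in particular a path $\pi^{\overline{\mathcal{R}}}_1 = \pi^{\overline{\mathcal{R}}}_2$ in the function type $\overline{\mathcal{R}} \to L$. Finally, given $m,m' : L$ and $r : \mathcal{R}(m)(m')$, I would form the element $(m,m',r) : \overline{\mathcal{R}}$ and use $\happly$ to evaluate the path at this point: this produces
\[ m \;=\; \pi^{\overline{\mathcal{R}}}_1(m,m',r) \;=\; \pi^{\overline{\mathcal{R}}}_2(m,m',r) \;=\; m' \enspace , \]
which is the desired conclusion $\mathcal{R}(m)(m') \to m = m'$, i.e.\ $\overline{\mathcal{R}} \subseteq \Delta_L$.

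The argument is essentially formal, so I do not expect a serious obstacle. The only points requiring care are bookkeeping ones: extracting from the definition of bisimulation that the two projections genuinely live in the \emph{same} coalgebra-morphism type $\Coalgmor{(\overline{\mathcal{R}},\alpha_{\mathcal{R}})}{(L,\out)}$ (the shared coalgebra structure being $\alpha_{\mathcal{R}}$), and cleanly passing from equality of morphism \emph{pairs} to equality of their carrier maps before applying $\happly$. The conceptual step—that contractibility of the morphism type forces $\pi^{\overline{\mathcal{R}}}_1$ and $\pi^{\overline{\mathcal{R}}}_2$ to agree—is immediate, since contractible types are propositions.
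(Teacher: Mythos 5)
Your proposal is correct and follows essentially the same route as the paper: both arguments observe that $\pi^{\overline{\mathcal{R}}}_1$ and $\pi^{\overline{\mathcal{R}}}_2$ are coalgebra morphisms from $(\overline{\mathcal{R}},\alpha_{\mathcal{R}})$ into the final coalgebra, invoke contractibility of the morphism type to identify them, and evaluate at $(m,m',r)$. Your write-up merely makes explicit the bookkeeping (projecting the equality of morphism pairs to carriers and applying $\happly$) that the paper leaves implicit.
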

\begin{proof}
\
Since $(L,\out)$ is the final coalgebra, for any coalgebra $(C,\gamma)$ there exists a
unique coalgebra morphism $\mathsf{unfold}_C : C \to L$. It
follows that $\pi^{\overline{\mathcal{R}}}_1 =
\mathsf{unfold}_{\overline{\mathcal{R}}} =
\pi^{\overline{\mathcal{R}}}_2$. 
Finally, given $r : \mathcal{R}(m)(m')$, we obtain
$m = \pi^{\overline{\mathcal{R}}}_1 (m , m' , r) =
\pi^{\overline{\mathcal{R}}}_2 (m , m' , r) = m'$.
\end{proof}

\section{Indexed $\M$-types}\label{sec:indexed}

In this section, we briefly state the main definitions for \emph{indexed} $\M$-types. 
The difference to plain $\M$-types is that the type of nodes of an indexed $\M$-type is actually given by a family of types,
indexed by a type of sorts $I$.

\begin{definition}
 An \fat{indexed container} is given by a quadruple $(I,A,B,r)$ such that $I : \type$ is a type, $i : I\entails A(i) : \type$ is a family of types 
 dependent on $I$, $B$ is a family $i:I,a:A(i) \entails B_i(a) : \type$ and $r$ specifies the \enquote{sort} of the subtrees, i.e.,
  $r : \prd{i:I}\prd{a : A(i)} B_i(a) \to I$.
\end{definition}

\begin{definition}
 The polynomial functor $P$ associated to an indexed container $(I,A,B,r)$ is an endofunction on the type $I \to \type$:
  \[ (P X)(i) :=  \sm{a : A(i)} \prd{b : B_i(a)} X (r_{i,a}(b))  \enspace . \]
  The functorial action on morphisms is, analogously to \Cref{def:poly_functor}, given by postcomposition.
\end{definition}

Coalgebras for indexed containers, and their morphisms, are defined completely analogously to \Cref{def:coalgebra}.
Again, we prove that terminal coalgebras for indexed containers exist uniquely:

\begin{theorem}
  Let $(I,A,B,r)$ be an indexed container. 
  Then the associated indexed $\M$-type is uniquely specified and can be constructed in the type theory described in \Cref{sec:hott}.
\end{theorem}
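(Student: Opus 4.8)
The plan is to replay the entire development of \Cref{sec:derive-m} verbatim, but now in the category of $I$-indexed types: every object lives in $I \to \type$ instead of $\type$, and all limits are computed pointwise over $I$. Concretely, a chain becomes a family $X : \N \to (I \to \type)$ together with maps $\pi_n$ whose $i$-component is $X_{n+1}(i) \to X_n(i)$, and its limit is the $I$-indexed type $L$ with $L(i) := \sm{x : \prd{n:\N} X_n(i)} \prd{n:\N} \pi_n(x_{n+1}) = x_n$; that is, $L(i)$ is the limit in the sense of \Cref{defn:limit} of the fiberwise chain $n \mapsto X_n(i)$. The universal property of \Cref{lem:univ_prop_limit} then holds fiberwise and assembles into an equivalence between maps $\prd{i:I} A(i) \to L(i)$ and cones over $X$.

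First I would build the chain $W_n := P^n 1$, where now $1$ denotes the terminal $I$-indexed type $i \mapsto 1$ and $\pi_n := P^n!$, with $P$ the indexed polynomial functor, and set $(L,p) := \lim W$. Exactly as in the proof of \Cref{thm:is-contr-final}, the structure map $\out : L \to PL$ is obtained from the composite $PL \simeq L' \simeq L$ furnished by the indexed forms of \Cref{lem:poly_continuous} and \Cref{lem:limit_of_shifted_chain}, with $L'$ the limit of the shifted chain. The long finality computation proving $\Coalgmor{(C,\gamma)}{(L,\out)}$ contractible is purely formal---it only shuffles $\Sigma$-, $\Pi$- and identity types and invokes \Cref{lem:first_elem_deter_cochain_tuple}---so it carries over pointwise in $I$ with no change. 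Uniqueness of the indexed $\M$-type reduces to the indexed analogue of \Cref{lem:is-prop-final}, whose proof (equivalent carriers, then univalence to identify them, then transport of the structure map along this identity) is entirely unaffected by the indexing.

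The two lemmas that genuinely need reverification are \Cref{lem:first_elem_deter_cochain_tuple} and \Cref{lem:poly_continuous}. The former generalizes immediately: for each fixed $i : I$ the fiber $n \mapsto X_n(i)$ carries a $G$-algebra structure for $GY = 1 + Y$ over $\N$, and homotopy-initiality of $\N$ makes the relevant type of algebra sections contractible, so the whole argument runs fiberwise and reassembles into an equivalence of $I$-indexed types.

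The real obstacle is the indexed form of \Cref{lem:poly_continuous}, namely that the indexed polynomial functor $(PX)(i) = \sm{a:A(i)} \prd{b:B_i(a)} X(r_{i,a}(b))$ commutes with limits of chains. The new difficulty, absent in the plain case, is the dependent product over $B_i(a)$ whose fibers $X(r_{i,a}(b))$ move through the reindexing $r_{i,a}$. Replaying the chain of equivalences in the proof of \Cref{lem:poly_continuous}, I would swap $\Pi$ and $\Sigma$, contract the sequence of node-labels $a_n : A(i)$ (which satisfy $a_{n+1} = a_n$) to a single $a : A(i)$ via \Cref{lem:first_elem_deter_cochain_tuple}, and so reach $\sm{a:A(i)} \prd{b:B_i(a)} \lim_n X_n(r_{i,a}(b))$. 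The crucial point is that the inner limit is exactly $L(r_{i,a}(b))$, precisely because limits of $I$-indexed types are pointwise; commuting $P$ past the limit thus amounts to commuting $\prd{b:B_i(a)}$ with limits fiber-by-fiber, which holds since $\Pi$ preserves limits. As in the plain case, the remaining---and most delicate---task is to check that the composite of all these fiberwise equivalences coincides with the canonical comparison map $\alpha : PL \to L^P$ coherently across all $i : I$; this is exactly where the higher-categorical coherence alluded to in \Cref{sec:rel-work} must be managed.
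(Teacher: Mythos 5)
Your proposal is correct and follows exactly the route the paper intends: the paper gives no written proof for the indexed theorem, stating only that the definitions are "completely analogous" and deferring the details to the \Agda formalization, which carries out precisely this pointwise-over-$I$ replay of the construction in \Cref{sec:derive-m}. You have in fact supplied more detail than the paper does, and you correctly isolate the one genuinely new point --- the indexed analogue of \Cref{lem:poly_continuous}, where the dependent product $\prd{b:B_i(a)}$ with reindexing $r_{i,a}$ must be commuted past the (pointwise) limit --- so there is nothing to object to.
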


An example of a coinductive type that needs indices to be expressed as an $\M$-type is a coinductive formulation of \emph{equivalence of types},
to our knowledge due to T.\ Altenkirch:

\begin{example}\label{ex:equivalence}
 Let $I:= \type \times \type$, and let $A : I \to \type$ be defined by $A(X,Y) := (X \to Y) \times (Y \to X)$.
 Define $B$ as $B(X,Y)(f,g) := X\times Y$ and $r(X,Y)(f,g)(x,y) := \left(f(x) = y\right) \times \left(x = g(y)\right)$.
 Then the associated $\M$-type is a family $\M : I\to \type$ and $M(A,B)$ is equivalent to $A \simeq B$. 
\end{example}

\section{Formalization}\label{sec:formal}
The proofs contained in this paper have been formalised in the proof assistant
\Agda in a self-contained development. 
The proof has been type-checked by version 2.4.2.2 of \Agda. 
The source code as well as HTML documentation can be found on
\url{https://github.com/HoTT/m-types/}. 
The \Agda source code is also part of this arXiv submission.

The formalised proofs deal with the indexed case (\Cref{sec:indexed}) directly,
but apart from that, they correspond closely to the informal proofs presented
here. In particular, they make heavy use of the ``equational reasoning''
technique to prove equivalences between types.

In fact, it is often the case that proving an equivalence between types $A$ and
$B$ ``directly'', i.e. by defining functions $A \to B$ and $B \to A$, and then
proving that they compose to identities in both directions, is unfeasibly hard,
due to the complexity of the terms involved.

However, in most cases, we can construct an equivalence between $A$ and $B$ by
composition of several simple equivalences.  Those simple building blocks range
from certain ad-hoc equivalences that make specific use of the features of the
two types involved, to very general and widely applicable ``rewriting rules'',
like the fact that we can swap a $\Sigma$-type with a $\Pi$-type (sometimes
called the \emph{constructive axiom of choice} \cite{hottbook}).

By assembling elementary equivalences, then, we automatically get both a
function $A \to B$ \emph{and} a proof that it is an equivalence.  However,
sometimes care is needed to ensure that the resulting function has the desired
computational properties.

An important consideration during the formalisation of the proof of 
\Cref{thm:is-contr-final} was keeping the size of the terms reasonably short.  For
example, in one early attempt, the innocent-looking term $\in$ was being
normalised into a term spanning more than 12000 lines.

The explosion in term size was clearly causing performance issues during
type-checking, which resulted in \Agda running out of memory while checking
apparently trivial proofs.

We solved this problem by moving certain definitions (like that of $\in$ itself)
into an \emph{abstract} block, thereby preventing \Agda from expanding it at all.
Of course, this means that we lost all the computational properties of certain
functions, so we had to abstract out their computational behaviour in the form
of propositional equalities, and manually use them in the proof of
\Cref{thm:is-contr-final}. This work-around is the source of most of the
complications in the formal proof.

\section{Conclusion and future work}

We have shown how to construct a class of coinductive types from the basic type
constructors and natural numbers in Homotopy Type Theory.  Our construction
follows a well-known pattern, that is known to work in type theory with identity
reflection rule.

We work in a univalent universe, but we make minimal use of univalence itself:
\Cref{lem:is-prop-final} is the only result that uses it directly, and elsewhere
univalence only appears indirectly through the use of functional extensionality.
We intend to make these dependencies more explicit in future developments of the
formalisation.

Finally, the coinductive types we construct do not satisfy the expected
computation rules \emph{judgmentally}, but only \emph{propositionally}.  This
fact would justify adding coinduction as a primitive rather than a derived
notion---provided that judgmental computation rules are validated by the
intended semantics. Those semantic questions are left for future work.

\bibliographystyle{plain}
\bibliography{literature}

\end{document}